\documentclass[twocolumn]{autart}  

\usepackage[utf8]{inputenc}
\usepackage[english]{babel}
\usepackage{graphicx,overpic,tikz}     
\usepackage{amsmath}
\usepackage{amssymb}
\usepackage{color}
\usepackage{float}
\usepackage{mathrsfs}
\usepackage{hyperref}
\usepackage{subfigure} 

\usepackage{pgfplots}

\makeatletter
\let\theoremstyle\@undefined                        
\makeatother

\usepackage{amsthm}

\usepackage{mathptmx}
\interdisplaylinepenalty=2500





\def\epsilon{\varepsilon}


\def\R{\mathbb{R}}

\newtheorem{Theo}{Theorem}
\newtheorem{Pro}{Proposition}
\newtheorem{Defi}{Definition}

\newtheorem{Lem}{Lemma}
\newtheorem{remark}{Remark}
\newtheorem{corollary}{Corollary}

\sloppy
\usepackage{floatflt}

\begin{document}
\begin{frontmatter}

\title{ Event-triggered boundary control of constant-parameter reaction–diffusion
PDEs: a small-gain approach} 

\thanks[footnoteinfo]{This work has been partially supported  by CPER ELSAT 2020, Project,
Contratech.
}

\author[INRIA,Central]{Nicol\'as Espitia}\ead{nicolas.espitia-hoyos@inria.fr}, \ %
\author[NATUA]{Iasson Karafyllis}\ead{iasonkar@central.ntua.gr,iasonkaraf@gmail.com}, \ 
\author[USD]{Miroslav Krstic}\ead{krstic@ucsd.edu}

\address[INRIA]{Valse team, Inria
Lille - Nord Europe }  
\address[Central]{CRIStAL (UMR-CNRS 9189), Ecole Centrale de Lille, Cit\'e Scientifique, 59651 Villeneuve-dAscq, France.}
\address[NATUA]{Department of Mathematics, National Technical University of Athens, Zografou Campus, 15780, Athens, Greece.}             
\address[USD]{Department of Mechanical and Aerospace Engineering, University of California, San Diego, La Jolla, CA 92093-0411, USA.}  
%

\begin{keyword}                           
reaction-diffusion systems, backstepping control design, event-triggered control.   
\end{keyword}

\begin{abstract}                          
This paper deals with an  event-triggered boundary control of constant-parameters reaction-diffusion PDE systems. The approach relies on   the emulation of  backstepping control along with   a suitable triggering condition which establishes the time instants at which the control value needs to be sampled/updated. 
In this paper,  it is shown that under the proposed event-triggered boundary control, there exists a  minimal dwell-time (independent of the initial condition) between two triggering times and furthermore the well-posedness and global exponential stability are  guaranteed. The analysis follows small-gain arguments and builds on  recent papers  on sampled-data control for this kind of PDE.   A simulation example is presented to validate the theoretical results.
\end{abstract}

\end{frontmatter}
\section{Introduction}
Control and estimation strategies must be implemented and validated into digital platforms. It is
important to study carefully the  issues concerning digital control such as sampling. This is because, if sampling is not addressed properly, the stability and estimation properties may be  lost. For finite-dimensional systems, namely networked control systems modeled by ordinary differential equations (ODEs), digital control has been extensively developed and several schemes for discretization and for sampling in time continuous-time controllers have been investigated, e.g., by sampled-data control  \cite{Hetel2017309} and event-triggered control strategies
 \cite{tabuada2007event,event-triger-Heeme-Johan-Tabu,Marchand2013-universal-formula,lemmon2010event,girard2014dynamic,Postoyan_Aframework_ETS2014,JiangSmallGainETC,Liu_ZPJiang2015}. The latter has become  popular and promising due to not only its efficient way of using communication and
computational resources by updating the control value aperiodically (only when needed) but also due to its rigorous
way of implementing continuous-time controllers into digital platforms. 

In general,  event-triggered control includes two main components: a feedback control law which  stabilizes the system  and an event-triggered mechanism which contains a triggering condition  that determines the time instants at which the control needs to be updated.  Two general approaches  exist for the design:  \textit{Emulation}  from which  the controller is a priori predesigned and only the event-triggered algorithm has to be designed (as in e.g. \cite{tabuada2007event})  and    \textit{Co-design}, where the joint design of the control law and the event-triggering mechanism  is   performed simultaneously (see e.g.  \cite{Seuret2016}).

 Nevertheless, for partial differential equations (PDEs) sampled-data  and event-triggered control strategies  without  model reduction   have not  achieved a sufficient level of maturity as in the finite-dimensional case. It has not been sufficiently clear (from theoretical and practical point of view)  how fast  sampling the in-domain or the boundary continuous-time controllers  should be for  preserving both stability and convergence properties of PDE systems. 
Few approaches on sampled-data and event-triggered control of parabolic PDEs  are  considered in \cite{Fridman2012826,KARAFYLLIS2018226,Selivanov_FridmanAuto}, \cite{Yao2013,JIANG20162854}.  In the context of abstract formulation of distributed parameter systems, sampled-data control is investigated  in \cite{Logemann2005} and \cite{Tan2009}.
 For hyperbolic PDEs,  sampled-data control is  studied in \cite{Davosampling2018} and  \cite{Sampled_dataKarafylis_Kristic}. 
Some recent works  have introduced  event-triggered  control strategies for linear hyperbolic PDEs under an emulation approach \cite{Espitia2016_Aut,Espitia2016Nolcos,Espitia2018TAC}. In  \cite{Espitia2016_Aut}  and \cite{Espitia2016Nolcos}, for instance,   event-triggered boundary controllers for linear conservation laws using output feedback are studied by following Lyapunov techniques (inspired by \cite{BastinCoron_book2016}).    In \cite{Espitia2018TAC},  the approach relies on  the backstepping method for coupled system of balance laws (inspired by \cite{Vazquez2011,krstic2008backstepping}) which leads to a full-state feedback control which is sampled according to a dynamic triggering condition. 
 Under such a triggering policy,  it has been possible to prove the existence of a minimal dwell-time between triggering time instants and therefore avoiding the so-called Zeno phenomena.\\
  
In sampled-data  control as well as   in  event-triggered control scenarios for PDEs, the effect of sampling (and therefore, the  underlying actuation error) has to be carefully handled. In particular,  for    reaction-diffusion parabolic PDEs    the  situation of having such errors  at the boundaries has been challenging and has become a central issue; especially  when having Dirichlet boundary conditions due to the lack of an ISS-Lyapunov function for the stability analysis.  In \cite{KARAFYLLIS2018226} this problem has been overcome by studying  ISS properties   directly from the nature of the PDE system (see also  e.g. \cite{Karafyllis_Krstic_SIAM2019,Karafyllis2019_book}) while using  modal decomposition and Fourier series analysis.  Lyapunov-based  approach  has not  been necessary   to perform the stability analysis and to be able to come up with  ISS properties and  small gain arguments.  Thus,  it has been possible to establish  the robustness with respect to  the actuation error.  This approach has allowed the derivation  of  an estimate of the diameter  of the  sampling period  on which the control is updated in a sampled-and-hold fashion. The drawback, however, is that such a period turns out to be truly  small, rendering   the approach very conservative. With periodic implementation, one may produce unnecessary updates of the sampled controllers, which   cause over utilization of computational and communication resources, as well as actuator  changes that are more frequent than necessary.

This issue  strongly motivates the study  of  event-triggered control for PDE systems. 
Therefore,  inspired by \cite{KARAFYLLIS2018226},   in this paper we propose an event-triggered boundary control based on the emulation of the backstepping boundary control. An event-triggering condition is derived and the stability analysis is performed by using small-gain arguments.  

The main contributions  are  summed up  as follows: 
\begin{itemize}
\item We prove that under the event-triggered control no Zeno solutions can appear.  A uniform minimal dwell-time (independent of the initial condition)   between two consecutive triggering time instants has been obtained.
\item Consequently, we guarantee the existence and uniqueness of solutions to the closed-loop system.
\item We prove that under the event-triggered boundary control,  the closed-loop system  is globally exponentially stable in the $L^2$- norm sense.
\end{itemize} 


The paper is organized as follows. In Section \ref{section_problem_form}, we introduce the class of reaction-diffusion parabolic systems,  some preliminaries on stability and backstepping   boundary control and the  preliminary notion of existence and uniqueness of solutions. Section \ref{Event-triggered-strategySection}  provides the  event-triggered boundary control and the main results.   Section~\ref{numerical_simulation} provides a numerical example to illustrate the main results.  Finally, conclusions and perspectives are given in Section~\ref{conslusion_and_perspect}.

\paragraph*{Notations}
$ \R_{+}$ will denote the set of nonnegative real numbers.   Let $S \subseteq \mathbb{R}^n$ be an open set and let $A \subseteq  \mathbb{R}^n$ be a set that satisfies $S \subseteq  A \subseteq  \bar{S}$. By $C^0(A;\Omega)$, we denote the class of continuous functions on $A$, which take values in $\Omega \subseteq  \mathbb{R}$. By $C^k(A;\Omega)$, where $k\geq 1$ is an integer, we denote the class of functions on $A$, which takes values in $\Omega$ and has continuous derivatives of order $k$. In other words, the functions of class $C^k(A;\Omega)$ are the functions which have continuous derivatives of order $k$ in $S=int(A)$ that can be continued continuously to all points in $\partial S \cap A$.  $L^2(0,1)$ denotes the equivalence class of Lebesgue measurable functions $f: [0,1] \rightarrow \R$ such that $\Vert f\Vert =\left(\int_{0}^{1}  \vert f(x)\vert ^{2}dx\right)^{1/2} < \infty$.   Let $u:\mathbb{R}_{+}\times [0,1] \rightarrow \mathbb{R}$ be given. $u[t]$ denotes the profile of $u$ at certain $t\geq 0$, i.e. $(u[t])(x) = u(t,x)$, for all $x \in [0,1]$. For an interval $I \subseteq \mathbb{R}_{+}$, the space $C^{0}(I;L^2(0,1))$ is the space of continuous mappings $I\ni t \rightarrow u[t] \in L^2(0,1)$. $H^2(0,1)$  denotes the Sobolev space of functions $f \in L^2(0,1)$ with square integrable (weak) first and second-order derivatives $f^{'}(\cdot), f^{''}(\cdot) \in L^2(0,1)$.  $I_m(\cdot)$, $J_m(\cdot)$ with $m \in \mathbb{Z}$, denote the modified Bessel and (nonmodified) Bessel functions of the first kind.

\section{Preliminaries and problem description}\label{section_problem_form}

Let us consider the following  scalar reaction-diffusion system with constant coefficients:
\begin{eqnarray}\label{eq:sysparabolic0}
u_t(t,x) & =&  \theta u_{xx}(t,x) + \lambda u(t,x) \\
u(t,0)&=&0\\
u(t,1)&=& U(t)  \label{BC_parabolic_PDE_u0}
\end{eqnarray} 
and initial condition:
\begin{equation}\label{IC_parabolic_PDE_u0}
 u(0,x)=u_{0}(x)
\end{equation}
where $\theta >0$ and $\lambda \in \mathbb{R}$. $u: [0,\infty)\times[0,1]  \rightarrow \mathbb{R} $ is the system state  and  
   $U(t) \in \mathbb{R}$ is the control input. The control design relies on the Backstepping approach \cite{Smyshlyaev-Krstic2004,krstic2008boundary}  under which   the following continuous-time   controller  (nominal boundary feedback)  has been obtained:
\begin{equation}\label{control_function_continuous_with_K}
U(t)=  \int_{0}^{1} K(1,y)u(t,y)dy  
\end{equation} 
It has then been proved that  the  under  continuous-time controller \eqref{control_function_continuous_with_K}  with 
control gain $K$ satisfying:
\begin{equation}\label{solutionKernelexplicit}
\begin{split}
 K(x,y)& = -y\gamma\frac{I_1\left(\sqrt{\gamma(x^2- y^2)}\right)}{ \sqrt{\gamma(x^2- y^2)}}
\end{split}
\end{equation}
evolving in  a triangular domain given by $\mathcal{T}= \{  (x,y): 0 \leq y < x \leq 1 \}$ and  with $\gamma=(\lambda +c)/\theta$ (where $c\geq 0$ is a design parameter),   the closed-loop system \eqref{eq:sysparabolic0}-\eqref{IC_parabolic_PDE_u0} is globally exponentially stable  in $L^2$- norm sense. 

\subsection{ \textbf{Event-triggered control and emulation of the backstepping design}}

We aim at stabilizing the closed-loop system on events while   sampling  the continuous-time controller \eqref{control_function_continuous_with_K}   at certain sequence of time instants  $(t_{j})_{j \in \mathbb{N}}$,   that will be characterized later on.   The control value is held constant between two successive time instants and it is updated when some state-dependent condition is verified.
In this scenario,   we need to suitably modify  the boundary condition in \eqref{eq:sysparabolic0}-\eqref{BC_parabolic_PDE_u0}.
The boundary value of the state   is going to be  given by: 
\begin{equation}\label{new_boundary_originalsystems}
u(t,1) = U_d(t) 
\end{equation}
with 
\begin{equation}\label{control_function_event-triggered_with_K}
U_d(t)=  \int_{0}^{1} K(1,y)u(t_{j},y)dy  
\end{equation}
for all $t \in [t_{j},t_{j+1})$, $j \geq 0$. Note that $U_d(t)=U(t)+d(t)$ with $U(t)$ given by \eqref{control_function_continuous_with_K} and $d$ given by:
\begin{equation}\label{deviation_actuation}
d(t)=\int_{0}^{1} K(1,y)u(t_{j},y) dy   - \int_{0}^{1} K(1,y) u(t,y)  dy 
\end{equation}
Here,  $d$ (which will be fully characterized along with $(t_{j})_{j\in \mathbb{N}}$ in the next section) can be viewed as an actuation deviation between the   nominal boundary feedback  and the event-triggered boundary control \footnote{In sampled-data control  as in  \cite{KARAFYLLIS2018226}, such a deviation is called  input holding error.}.

Hence, the control problem we aim at handling is the following:
\begin{eqnarray}\label{eq:sysparabolic}
u_t(t,x) & =&  \theta u_{xx}(t,x) + \lambda u(t,x) \\
u(t,0)&=&0\\
u(t,1)&=& U_d(t)  \label{BC_parabolic_PDE_u}
\end{eqnarray} 
for all $t \in [t_{j},t_{j+1})$, $j\geq 0$,  and initial condition:
\begin{equation}\label{IC_parabolic_PDE_u}
 u(0,x)=u_{0}(x)
\end{equation}
We will perform the emulation of the backstepping  which requires also information of the target system. Indeed, let us recall that the backstepping method  makes use of an invertible Volterra transformation:
\begin{equation}\label{backstepping _inverse_trasf_1}
\begin{split}
w(t,x) &=  u(t,x) - \int_{0}^{x} K(x,y)u(t,y) dy 
\end{split}
\end{equation}
with kernel $K(x,y)$ satisfying \eqref{solutionKernelexplicit} which    maps the  system \eqref{eq:sysparabolic}-\eqref{IC_parabolic_PDE_u}   into the  following  target system:
\begin{eqnarray}\label{Target_system_w}
w_t(t,x) & =&  \theta w_{xx}(t,x) - c w(t,x)  \\
w(t,0)&=&0\\
w(t,1)&=& d(t)
\end{eqnarray} 
 with initial condition:
\begin{equation}\label{IC_Target_system_w}
 w(0,x)=u_{0}(x)  -  \int_{0}^{x} K(x,y)u_{0}(y) dy 
\end{equation}
where $c>0$ can be chosen arbitrary.\\

\begin{remark}
It is worth recalling that the  Volterra backstepping transformation \eqref{backstepping _inverse_trasf_1} is invertible whose inverse is given as follows: \\
\begin{equation}\label{backstepping _trasf_inverse}
\begin{split}
 u(t,x) &=  w(t,x) + \int_{0}^{x} L(x,y)w(t,y) dy 
\end{split}
\end{equation} 
where $L $ satisfies:
\begin{equation}\label{solutionKernelexplicit_inverse}
\begin{split}
L(x,y)& = -y\gamma\frac{J_1\left(\sqrt{\gamma(x^2- y^2)}\right)}{ \sqrt{\gamma(x^2- y^2)}}
\end{split}
\end{equation}
 with $\gamma=(\lambda +c)/\theta$.
\end{remark}

\subsection{\textbf{Well-posedness issues}}
  The notion of solution for 1-D  linear parabolic systems under boundary sampled-data control has been rigorously analyzed in  \cite{KARAFYLLIS2018226}. In this paper, we follow the same framework. \\

\begin{Pro}\label{existence_and_continuity_solutions}
 There exists a unique solution  $u \in  \mathcal{C}^{0}([t_{j},t_{j+1}]; L^{2}(0,1))$ to the system \eqref{eq:sysparabolic}-\eqref{IC_parabolic_PDE_u}   between two time instants $t_{j}$ and $t_{j+1}$ satisfying $u \in C^{1}((t_{j},t_{j+1}) \times [0,1])$, $u[t] \in C^2([0,1])$ for all $t \in (t_{j},t_{j+1}] $ and initial data $u[t_{j}] \in L^2(0,1)$.  
\end{Pro}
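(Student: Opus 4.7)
The plan is to reduce the problem on each interval $[t_j,t_{j+1})$ to a standard homogeneous-boundary parabolic Cauchy problem and then invoke analytic semigroup theory, exactly in the spirit of the framework of \cite{KARAFYLLIS2018226}. Observe first that, by definition, $U_d(t)\equiv \bar U_j:=\int_0^1 K(1,y)u(t_j,y)\,dy$ is a fixed real number on the whole interval $[t_j,t_{j+1})$, because $u[t_j]\in L^2(0,1)$ makes the integral well defined and the sampled value is frozen until the next event. This constancy is the key simplification.

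The first step is to homogenize the boundary condition via a static affine lift. Define $\tilde u(t,x)=u(t,x)-x\,\bar U_j$; then $\tilde u(t,0)=\tilde u(t,1)=0$ and $\tilde u$ solves
\begin{equation*}
\tilde u_t=\theta\tilde u_{xx}+\lambda\tilde u+\lambda x\,\bar U_j,\qquad \tilde u[t_j]=u[t_j]-x\,\bar U_j\in L^2(0,1).
\end{equation*}
Since the lift is time-independent on $[t_j,t_{j+1})$, all the regularity of $u$ is inherited from that of $\tilde u$ and vice versa.

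The second step is to apply the classical theory of analytic semigroups on $L^2(0,1)$. The operator $Av=\theta v_{xx}$ with domain $D(A)=H^2(0,1)\cap H^1_0(0,1)$ is self-adjoint and negative, hence generates an analytic semigroup $S(t)$; the bounded perturbation $\lambda I$ preserves this property. With $f(x):=\lambda x\,\bar U_j\in L^2(0,1)$ constant in time, Duhamel's formula
\begin{equation*}
\tilde u(t)=S(t-t_j)\tilde u[t_j]+\int_{t_j}^{t}S(t-s)f\,ds
\end{equation*}
produces a unique mild solution $\tilde u\in C^0([t_j,t_{j+1}];L^2(0,1))$. Uniqueness follows from the semigroup identity and the Banach contraction argument on this finite subinterval.

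The third step upgrades the mild solution to the stated classical regularity, and this is the delicate part. Analytic-semigroup smoothing gives $S(t-t_j)\tilde u[t_j]\in D(A^n)$ for every $n\ge 1$ and every $t>t_j$, so in particular in $C^2([0,1])$ with zero trace; combined with the fact that the Duhamel integral of the smooth forcing $f$ is itself smooth for $t>t_j$, we obtain $\tilde u[t]\in C^2([0,1])$ for every $t\in(t_j,t_{j+1}]$ and $\tilde u\in C^1((t_j,t_{j+1})\times[0,1])$ by the standard parabolic regularity estimates up to the boundary (compatibility is automatic because the Dirichlet data is zero and constant in time). Reversing the lift gives the claimed regularity for $u$, with $u[t]\in C^2([0,1])$ matching the boundary value $\bar U_j$ at $x=1$ for every $t\in(t_j,t_{j+1}]$. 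The main obstacle is precisely this boundary regularity of the Duhamel term: one must check that the analytic-semigroup smoothing is strong enough to deliver joint $C^1$ regularity including the endpoints $x=0,1$, which is why we appeal to the framework of \cite{KARAFYLLIS2018226}, where exactly the same sampled-and-held boundary setting is treated and these estimates are carried out in detail.
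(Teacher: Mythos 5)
Your argument is correct, but it is worth noting that the paper's own ``proof'' is a one-line citation of \cite[Theorem 2.1]{KARAFYLLIS2018226}, so what you have done is reconstruct the content of that theorem rather than mirror the paper's text. Your two key moves are sound: (i) the observation that $U_d$ is frozen at the constant $\bar U_j$ on $[t_j,t_{j+1})$, and (ii) the affine lift $\tilde u = u - x\bar U_j$ to a homogeneous Dirichlet problem with the time-independent forcing $\lambda x \bar U_j$ --- this is exactly the change of variable the paper itself deploys later in the proof of Lemma~\ref{Estimate_of_supNorm} (equation \eqref{change_of_variable_Lemma1}). Where you diverge in flavor is the machinery: you invoke abstract analytic-semigroup generation and Duhamel's formula, whereas the cited reference constructs the solution explicitly by Fourier (modal) expansion in the eigenfunctions $\sqrt{2}\sin(n\pi x)$ of the Dirichlet operator and reads off the $C^1$/$C^2$ regularity from the uniform convergence of the differentiated series for $t>t_j$; for this self-adjoint operator the two are equivalent, the spectral representation being the concrete form of your $S(t)$. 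Your approach buys generality (it would survive non-self-adjoint perturbations), while the Fourier route buys the explicit, checkable boundary regularity that you correctly flag as the delicate step and for which you ultimately defer to the same reference --- so the deferral is legitimate, and the proposal stands.
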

\begin{proof}
It is a straightforward application of  \cite[Theorem 2.1]{KARAFYLLIS2018226}. 
\end{proof}

 In what follows we assume that in  open-loop, the  system  \eqref{eq:sysparabolic}-\eqref{IC_parabolic_PDE_u} is unstable or neutrally stable, i.e., $\lambda \geq \theta \pi^2$. The analysis is similar (and far easier) for the case where the open-loop system is asymptotically stable, because in this case we can use the trivial feedback law with $K(1,y)=0$.
\section{Event-triggered  boundary control and main results}\label{Event-triggered-strategySection}

In this section we introduce the event-triggered boundary control and the main results:  the existence of a minimal dwell-time  which is independent of the initial condition,  the well-posedness  and the exponential stability of the closed-loop system  under the event-triggered boundary control.\\
Let us first  define the event-triggered boundary control considered in this paper.  It encloses both a  triggering condition (which determines the time instant at which the controller needs to be sampled/updated) and the backstepping  boundary  feedback  \eqref{control_function_event-triggered_with_K}.  The proposed event-triggering condition  is based on the evolution of the magnitude of the actuation deviation \eqref{deviation_actuation} and the evolution of the $L^2$- norm of the state.   \\

 \begin{Defi}[Definition of the event-triggered boundary control]\label{Definition_event_based_controller}
Let $\beta >0$ and  let  $k(y):= K(1,y)$ with $K$ being the kernel  given in \eqref{solutionKernelexplicit}. The event-triggered boundary control is defined by considering the following  components: \\

\noindent I)   (The event-trigger) The times of the events $t_j\geq 0$  with $t_0=0$  form a finite or countable set of times  which is determined by the following rules for some $j\geq0$: \\
\begin{itemize}
\item [a)] if $ \{ t \in \mathbb{R}_{+}   \vert t >  t_{j}  \wedge     \vert d(t) \vert      >  \beta \Vert  k \Vert \Vert u[t] \Vert +\beta \Vert  k \Vert \Vert u[t_{j}] \Vert     \} = \emptyset$  then the set of the times of the events is $\{t_{0},...,t_{j}\}$.\\ 
\item [b)] if $\{ t \in \mathbb{R}_{+}   \vert t >  t_{j}  \wedge     \vert d(t) \vert      >  \beta \Vert  k \Vert \Vert u[t] \Vert +\beta \Vert  k \Vert \Vert u[t_{j}] \Vert     \} \neq \emptyset$, then the next event time is given by:
\begin{equation}\label{triggering_conditionISS_with_backstepping_original}
\begin{split} 
t_{j+1} :=   &  \inf \{ t \in \mathbb{R}_{+}   \vert t >  t_{j}  \wedge     \vert d(t) \vert      >  \beta \Vert  k \Vert \Vert u[t] \Vert \\
& \hskip 2.9cm +\beta \Vert  k \Vert \Vert u[t_{j}] \Vert     \}   
\end{split} 
\end{equation}

\end{itemize}

%
%

\noindent II) (the control action)  The boundary feedback law,
\begin{equation}\label{operator_controlfunction}
U_d(t) =    \int_{0}^{1} k(y)u(t_{j},y) dy, \quad  \forall t \in [t_{j},t_{j+1})
\end{equation}
\end{Defi}

%
%
\subsection{\textbf{Avoidance of the Zeno phenomena}}
It is worth mentioning that guaranteeing  the existence of a minimal dwell-time between two triggering times  avoids the so-called Zeno phenomena that  means infinite triggering times in a finite-time interval.  It represents  infeasible   practical implementations into digital platforms because it would be required to sample  infinitely fast. 
 Before we tackle the result on existence of minimal dwell-time,  let us first introduce the following intermediate result.\\

\begin{Lem}\label{Estimate_of_supNorm}
For the closed-loop system \eqref{eq:sysparabolic}-\eqref{BC_parabolic_PDE_u}, the following estimate holds, for all $t \in [t_{j},t_{j+1}]$, $j\geq 0$:
\begin{equation}\label{upper_bound_supNormofU}
\sup_{t_{j} \leq s \leq t_{j+1}}(\Vert u[s] \Vert) \leq Q \Vert u[t_{j}] \Vert
\end{equation}
 where $Q= e^{p/2(t_{j+1}- t_{j})}(1+ \frac{\sqrt{3}}{3} \Vert k \Vert + \frac{\Vert k \Vert}{\sqrt{p}}) + \frac{\sqrt{3}}{3} \Vert k \Vert$ and $p = -2\theta \pi^2 + 2 \lambda + \frac{1}{3} \lambda^2 $.  
\end{Lem}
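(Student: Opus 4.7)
The plan is to reduce the non-homogeneous boundary value problem to one with zero Dirichlet data by a linear shift, apply standard energy estimates on the shifted variable, and then reconstruct the bound on $u$.

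First I would introduce the auxiliary variable $v(t,x) = u(t,x) - x\, U_{d}(t)$ on $[t_j, t_{j+1})$. Since $U_d$ is constant on this interval, $v_t = u_t$; and since $x \mapsto x$ has zero second derivative, $v_{xx} = u_{xx}$. The function $v$ therefore satisfies
\begin{equation*}
v_t = \theta v_{xx} + \lambda v + \lambda\, x\, U_d, \quad v(t,0) = v(t,1) = 0.
\end{equation*}
The homogeneous Dirichlet data enable the use of the Poincaré inequality $\|v_x\|^2 \geq \pi^2 \|v\|^2$ after an integration by parts.

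Second, I would compute the energy estimate. Multiplying by $v$ and integrating,
\begin{equation*}
\tfrac{d}{dt}\|v\|^2 \leq (-2\theta\pi^2 + 2\lambda)\|v\|^2 + 2\lambda\, U_d \int_0^1 x\, v\, dx.
\end{equation*}
Using Cauchy–Schwarz with $\|x\|_{L^2} = 1/\sqrt{3}$ and Young's inequality in the form $2ab \leq \alpha a^2 + b^2/\alpha$, the critical choice $\alpha = \lambda^2/3$ converts the cross term into exactly $\tfrac{1}{3}\lambda^2 \|v\|^2 + U_d^2$. This produces the coefficient $p = -2\theta\pi^2 + 2\lambda + \tfrac{1}{3}\lambda^2$ in front of $\|v\|^2$ (which is positive under the standing assumption $\lambda \geq \theta\pi^2$), giving
\begin{equation*}
\tfrac{d}{dt}\|v\|^2 \leq p\, \|v\|^2 + U_d^2.
\end{equation*}

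Third, since $U_d$ is constant on $[t_j, t_{j+1})$, Grönwall's lemma yields
\begin{equation*}
\|v[s]\| \leq e^{p(s-t_j)/2}\Bigl(\|v[t_j]\| + |U_d|/\sqrt{p}\Bigr).
\end{equation*}
Returning to $u = v + x\,U_d$ gives $\|u[s]\| \leq \|v[s]\| + |U_d|/\sqrt{3}$, and at the sampling instant $\|v[t_j]\| \leq \|u[t_j]\| + |U_d|/\sqrt{3}$. The bound $|U_d(t_j)| = |\int_0^1 k(y) u(t_j,y)\,dy| \leq \|k\|\,\|u[t_j]\|$ finally substitutes the control into everything in terms of $\|u[t_j]\|$. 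Taking the supremum over $s \in [t_j, t_{j+1}]$ and using $s - t_j \leq t_{j+1} - t_j$ in the exponential, I collect the coefficients to obtain exactly
\begin{equation*}
Q = e^{p(t_{j+1}-t_j)/2}\Bigl(1 + \tfrac{\sqrt{3}}{3}\|k\| + \tfrac{\|k\|}{\sqrt{p}}\Bigr) + \tfrac{\sqrt{3}}{3}\|k\|.
\end{equation*}

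The only non-routine step is the choice of Young-inequality parameter $\alpha = \lambda^2/3$ that produces the specific constant $p$; everything else is a clean Poincaré energy estimate plus bookkeeping of the shift $x\,U_d$. Because the target system variable $w$ would only provide a bound driven by $d(t)$ (not by $\|u[t_j]\|$ directly), I would avoid passing through the backstepping target here and work on $u$ directly via the shift.
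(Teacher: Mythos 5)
Your proposal is correct and follows essentially the same route as the paper's own proof: the identical shift $v(t,x)=u(t,x)-xU_d$, the same Wirtinger/Poincar\'e energy estimate with the Young parameter chosen to produce $p=-2\theta\pi^2+2\lambda+\tfrac{1}{3}\lambda^2$, the same Gr\"onwall/comparison step, and the same reconstruction via the triangle inequalities and $\vert U_d\vert\leq \Vert k\Vert\,\Vert u[t_j]\Vert$. The only cosmetic difference is that the paper applies the comparison principle on interior subintervals $[a,b]\subset(t_j,t_{j+1})$ and then invokes continuity of $V$ on the closed interval, whereas you apply Gr\"onwall directly; this does not affect correctness.
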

\begin{proof}
 We consider $U_d$ given  by \eqref{operator_controlfunction}  and define 
\begin{equation}\label{change_of_variable_Lemma1}
v(t,x) = u(t,x) - x U_d 
\end{equation}
It is straightforward to verify that $v$ satisfies the following PDE for all $t \in (t_{j},t_{j+1})$, $j\geq 0$,
\begin{eqnarray}\label{eq:sysparabolic-v-fredholmtransform}
v_t(t,x) & =&  \theta v_{xx}(t,x) + \lambda v(t,x) +  \lambda x U_d \\
v(t,0)&=&0\\
v(t,1)&=& 0  \label{BC_parabolic_PDE-v-fredholmtransform}
\end{eqnarray} 
Well-posedness issues for \eqref{eq:sysparabolic-v-fredholmtransform}-\eqref{BC_parabolic_PDE-v-fredholmtransform} readily  follows while being a particular case of the PDE considered in  \cite[Lemma 5.2]{Karafyllis_Adaptive-regulation-triggered2019}. Now, 
by considering the  function $V(t)=\frac{1}{2}\Vert v[t] \Vert^2$   
and taking its time derivative along the solutions of  \eqref{eq:sysparabolic-v-fredholmtransform}- \eqref{BC_parabolic_PDE-v-fredholmtransform} and using the Wirtinger's inequality, we obtain, for $t\in (t_{j}, t_{j+1})$:
\begin{equation*}
\dot{V} \leq - \theta \pi^2 \Vert v[t] \Vert^2 + \lambda \Vert v[t] \Vert^2 + U_d \int_{0}^{1}(\lambda x)v(t,x) dx 
\end{equation*}  
In addition, using the Young's inequality on the last term along with the Cauchy-Schwarz inequality, we get
\begin{equation*}
\dot{V}(t) \leq - \theta \pi^2 \Vert v[t] \Vert^2 + \lambda \Vert v[t] \Vert^2  + \frac{1}{2}U_d^2 + \frac{1}{6}\lambda^2 \Vert v[t] \Vert^2
\end{equation*}
 Then, for $t \in (t_{j}, t_{j+1})$:
\begin{equation*}
\dot{V}(t) \leq p V(t) + \frac{1}{2}U_d^2
\end{equation*}
where $p =  -2\theta \pi^2 + 2 \lambda + \frac{1}{3}\lambda^2 $. Using the Comparison principle on an interval $[a,b]$ where $a > t_{j}$ and $b < t_{j+1}$, one gets, for all $t \in [a,b]$:
 \begin{equation*}
 V(t) \leq e^{p(t-a)}(V(a)+ \frac{1}{2 p}U_d^2)
 \end{equation*}
 Due to the continuity of $V(t)$ on $[t_{j},t_{j+1}]$  and the fact that $a,b$ are arbitrary, we can conclude that
 \begin{equation}\label{Estimate_functional_p_tk+1}
  V(t) \leq e^{p(t_{j+1}-t_{j})}\left(V(t_{j})+ \tfrac{1}{2 p}U_d^2\right)
 \end{equation}
 for all $t \in [t_{j},t_{j+1}]$. Using the Cauchy-Schwarz inequality, we have that $\vert U_d \vert \leq \Vert k \Vert \Vert u[t_{j}] \Vert$. Using this fact in  \eqref{Estimate_functional_p_tk+1}, we get, in addition:
 \begin{equation*}
 \Vert v[t] \Vert^2 \leq e^{p(t_{j+1}-t_{j})} \left(\Vert v[t_{j}] \Vert^2 + \tfrac{1}{p} \Vert k \Vert^2 \Vert u[t_{j}] \Vert^2 \right)
\end{equation*}  
Using the above estimate in conjunction  with \eqref{change_of_variable_Lemma1} and the triangle inequalities, we obtain the following inequalities:  
\begin{equation*}
\begin{split}
\Vert u[t] \Vert \leq \Vert v[t]\Vert +  \tfrac{\sqrt{3}}{3} \vert U_d\vert \\
\Vert v[t_{j}] \Vert \leq \Vert u[t_{j}]\Vert +  \tfrac{\sqrt{3}}{3} \vert U_d\vert
\end{split}
\end{equation*}
together with $\vert U_d \vert \leq \Vert k \Vert \Vert u[t_{j}] \Vert$, we finally obtain, for all $t \in [t_{j}, t_{j+1}]$,
\begin{equation*}
\sup_{t_{j} \leq s \leq t_{j+1}}(\Vert u[s] \Vert) \leq Q \Vert u[t_{j}] \Vert
\end{equation*}
with $Q= e^{p/2(t_{j+1}- t_{j})}(1+  \tfrac{\sqrt{3}}{3} \Vert k \Vert + \frac{\Vert k \Vert}{\sqrt{p}}) +  \tfrac{\sqrt{3}}{3} \Vert k \Vert$.  This concludes the proof.
\end{proof}

\begin{Theo}\label{theo:minimal_dwellt_time}
Under the event-triggered boundary control \eqref{triggering_conditionISS_with_backstepping_original}-\eqref{operator_controlfunction},  there exists a minimal dwell-time between two triggering times, i.e.  there exists a constant $\tau >0$ (independent of the initial condition $u_0$) such that $t_{j+1} - t_{j} \geq \tau $,  for all $j \geq 0$.
\end{Theo}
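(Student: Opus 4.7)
The plan is to show that $|d(t)|$, which starts from $d(t_j^+) = 0$ by construction of $U_d(t_j)$, cannot climb to the triggering threshold $\beta \Vert k\Vert(\Vert u[t]\Vert + \Vert u[t_j]\Vert)$ in an arbitrarily short time. Since this threshold is bounded below by $\beta\Vert k\Vert \Vert u[t_j]\Vert$, a quantitative continuity estimate for $d$ at $t_j$ will yield the desired dwell-time.

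First I would dispose of the degenerate case $\Vert u[t_j]\Vert = 0$: by \eqref{operator_controlfunction}, $U_d = 0$, and by the uniqueness part of Proposition 1 the zero trajectory prevails on $[t_j,\infty)$, so $d\equiv 0$ and no subsequent event can be triggered; in this case $t_{j+1} = +\infty$ and the dwell-time bound is vacuous. For $\Vert u[t_j]\Vert > 0$, I would exploit the linearity of \eqref{eq:sysparabolic}--\eqref{IC_parabolic_PDE_u} together with the fact that both sides of the triggering inequality scale homogeneously in $u[t_j]$; this lets me normalize $\Vert u[t_j]\Vert = 1$ without loss of generality, so that any bound obtained below will be automatically independent of the initial data.

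The core step is to produce a growth estimate of the form $|d(t)| \le M(t-t_j)$ on $[t_j,t_{j+1}]$, where $M$ is a continuous function with $M(0)=0$ depending only on $\theta$, $\lambda$, $c$, $\Vert k\Vert$, and the constant $Q$ from Lemma 1. I would write $d(t) = \int_{t_j}^t \dot d(s)\,ds$ with $\dot d(s) = -\int_0^1 k(y)\bigl[\theta u_{yy}(s,y)+\lambda u(s,y)\bigr]dy$ (valid for $s>t_j$ by Proposition 1), then integrate by parts twice in $y$. Because $k\in C^{\infty}$ with $k(0)=K(1,0)=0$ and $u(s,0)=0$, most boundary terms vanish, leaving an expression involving $\int_0^1 k''(y)u(s,y)dy$, $k'(1)U_d$, the boundary flux $u_y(s,1)$, and the zero-order integral $\int_0^1 k(y)u(s,y)dy$. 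Each integral term is controlled uniformly by $\Vert u[s]\Vert$ via the Cauchy--Schwarz inequality and Lemma 1, while $|U_d|\le\Vert k\Vert\Vert u[t_j]\Vert$. Integrating over $[t_j,t]$ and combining these bounds gives $|d(t)| \le M(t-t_j)$ with the required structure. Equating $M(t-t_j)$ with the lower threshold $\beta\Vert k\Vert$ then defines the minimal dwell-time $\tau>0$ implicitly by $M(\tau)=\beta\Vert k\Vert$, which exists by continuity and $M(0)=0$; by the normalization step this $\tau$ is independent of $u_0$.

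The hard part will be the boundary flux $u_y(s,1)$ that remains after integration by parts, since it is not controlled by $L^2$ norms of $u$ alone. I would handle this by invoking the change of variables $v = u - x U_d$ used in the proof of Lemma 1---$v$ satisfies homogeneous Dirichlet boundary conditions and a heat-type equation with constant forcing---so that a parabolic regularity/energy argument on $v$ provides an integrated-in-time bound for $u_y(s,1)$ in terms of $\Vert u[t_j]\Vert$ and $\tau_j = t_{j+1}-t_j$. An alternative, possibly cleaner route is to exploit the identity $d(t) = w(t,1)$ derived from the target system \eqref{Target_system_w}: since $w(t_j^+,1)=0$ and $w$ solves a diffusion-reaction equation, standard trace and energy estimates for $w$ control $|w(t,1)|$ in terms of $(t-t_j)$ and $\Vert w[t_j]\Vert$, the latter bounded through the Volterra transformation by $\Vert u[t_j]\Vert$. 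Either route delivers the growth bound on $d$ needed to close the argument.
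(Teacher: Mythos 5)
Your overall architecture matches the paper's: start from $d(t_j^+)=0$, derive a growth bound of the form $|d(t)|\le M(t-t_j)\,\Vert u[t_j]\Vert$ with $M$ continuous and $M(0)=0$, compare with the threshold floor $\beta\Vert k\Vert\,\Vert u[t_j]\Vert$ attained at a triggering time, and divide out $\Vert u[t_j]\Vert$ to get a data-independent $\tau$ (treating $u[t_j]=0$ separately, as the paper also does). You also correctly isolate the real obstruction: after integrating $\dot d(s)=-\int_0^1 k(y)\bigl[\theta u_{yy}+\lambda u\bigr]dy$ by parts, the boundary flux $u_x(s,1)$ survives because $k(1)=K(1,1)=-\gamma/2\neq 0$, and it is not controlled by $\Vert u[s]\Vert$. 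But you do not close this gap, and neither of your two routes works as stated. Route (b) is circular: the identity $d(t)=w(t,1)$ is precisely the boundary condition of the target system \eqref{Target_system_w}, i.e.\ $d$ is the \emph{input} driving $w$; energy or trace estimates on $w$ cannot determine the boundary datum imposed on $w$. Route (a) is not wrong in principle, but ``standard parabolic regularity'' for $v=u-xU_d$ with only $L^2(0,1)$ data at $t_j$ yields a flux that blows up as $t\to t_j^+$ (roughly $|v_x(t,1)|\lesssim (t-t_j)^{-3/4-\epsilon}\Vert v[t_j]\Vert$ via fractional powers of the Dirichlet Laplacian); you would then have to show that its time integral vanishes as $t\to t_j^+$ at a rate depending only on $\Vert u[t_j]\Vert$. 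That is a genuinely delicate endpoint estimate which you assert rather than prove.

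The paper sidesteps this entirely with a step you are missing: replace $k$ by its $N$-mode truncation $g=\sum_{n=1}^N k_n\phi_n$ in the eigenfunctions $\phi_n(x)=\sqrt2\sin(n\pi x)$, and estimate $\tilde d(t)=\int_0^1 g(y)\bigl(u(t_j,y)-u(t,y)\bigr)dy$ instead. Since $\phi_n(1)=0$, the flux term $g(1)u_x(t,1)$ vanishes identically, so $\dot{\tilde d}$ is bounded by $L^2$ quantities alone and $|\tilde d(t)|\le (t-t_j)\bigl(\theta\Vert k\Vert F_N\Vert u[t_j]\Vert+G_N\sup_s\Vert u[s]\Vert\bigr)$. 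The truncation error $|d(t)-\tilde d(t)|\le \Vert k-g\Vert\bigl(\Vert u[t_j]\Vert+\Vert u[t]\Vert\bigr)$ is then absorbed into the triggering threshold by choosing $N$ so large that $\Vert k-g\Vert<\beta\Vert k\Vert$ --- which is exactly why the threshold carries both a $\Vert u[t]\Vert$ and a $\Vert u[t_j]\Vert$ term. Combined with Lemma \ref{Estimate_of_supNorm} this gives $a_0\le a_1 s+a_2 s e^{ps/2}$ with $a_0=\beta\Vert k\Vert-\Vert k-g\Vert>0$, hence $s=t_{j+1}-t_j\ge\tau>0$. Without this truncation device, or a fully worked-out substitute for the flux estimate, your argument does not go through.
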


\begin{proof}

Define $g \in C^2([0,1])$  by the following equation:
 \begin{equation}\label{definition_g_modaldecopsition}
 g(x):= \sum_{n=1}^{N}k_n \phi_{n}(x)
\end{equation}  
where $N\geq 1$ is an integer, $k_n := \int_{0}^{1}k(y)\phi_n(y)dy$, $k(y)=K(1,y)$ with $K$ satisfying \eqref{solutionKernelexplicit} and  $\phi_n(x) =\sqrt{2}\sin(n \pi x) ,n=1,2...$  are the eigenfunctions  of the Sturm-Liouville operator $A: D \rightarrow L^2(0,1)$ defined by
\begin{equation*}
(Af)(x)= - \theta \frac{d^2 f}{dx^2}(x) - \lambda f(x)
\end{equation*}
for all $f \in D$ and $x \in (0,1)$ and $D \subset H^2([0,1])$ is the set of functions $f : [0,1] \rightarrow \mathbb{R} $ for which $f(0)= f(1)=0 $. \\

Let us also define
\begin{equation}\label{tilde_deviation}
\tilde{d}(t)= \int_{0}^{1} g(y)\left(u(t_{j},y) - u(t,y)\right)dy
\end{equation}
for $t \in [t_{j},t_{j+1})$, for $j\geq 0$ and $g$ given  by \eqref{definition_g_modaldecopsition}. Taking the time derivative of  $\tilde{d}(t)$ along the solutions of \eqref{eq:sysparabolic}-\eqref{BC_parabolic_PDE_u}  yields, for all $t \in [t_{j},t_{j+1})$:
\begin{equation*}
\begin{split}
\dot{\tilde{d}}(t) =& \theta \left( \frac{dg}{dx}(1)u(t,1) - g(1)\frac{\partial u}{\partial x}(t,1) \right) \\
& + \theta \left(g(0)\frac{\partial u}{\partial x}(t,0) - \frac{dg}{dx}(0)u(t,0) \right) \\
&+ \int_{0}^{1}(Ag)(y)u(t,y)dy
\end{split}
\end{equation*}
Note that  $g(1)\frac{\partial u}{\partial x}(t,1) = 0 $  by virtue of the function  $g$ evaluated at $x=1$ as $\phi_n(1)=0$. 
In addition, by  the eigenvalue problem $A\phi_n = \lambda_n \phi_n$ where $\lambda_n = n^2\pi^2\theta - \lambda $  are real  eigenvalues 
and using the boundary conditions   \eqref{BC_parabolic_PDE_u}, we get 
\begin{equation*}
\begin{split}
\dot{\tilde{d}}(t) =& \theta \int_{0}^{1}k(y)u(t_{j},y)dy \sum_{n=1}^{N}k_n \frac{d \phi_n}{dx}(1) \\
& + \sum_{n=1}^{N}k_n \lambda_n \int_{0}^{1}\phi_n(y)u(t,y)dy
\end{split}
\end{equation*}
Using the Cauchy-Schwarz inequality and $\Vert \phi_n \Vert=1$ for $n=1,2,...$ the following estimate holds for $t \in (t_{j},t_{j+1})$, $j\geq 0$:
 \begin{equation}\label{estimate_tilded2}
\vert \dot{\tilde{d}}(t) \vert   \leq  \theta \Vert k \Vert \Vert u[t_{j}] \Vert F_N + \Vert u[t] \Vert G_{N}
\end{equation}
where   $F_N := \sum_{n=1}^{N}\Big\vert k_n \frac{d \phi_n}{dx}(1)  \Big \vert $ and  $G_N := \sum_{n=1}^{N} \vert k_n \lambda_n \vert $.
Therefore, we obtain from \eqref{estimate_tilded2} and the fact that $\tilde{d}(t_{j})=0$,  the following estimate:
 \begin{equation}\label{estimate_tilded3}
\vert \tilde{d}(t) \vert \leq (t -t_{j})\theta \Vert k \Vert \Vert u[t_{j}] \Vert F_N + (t-t_{j})\sup_{t_{j} \leq s \leq t}(\Vert u[s] \Vert) G_N
\end{equation}
Note that  from \eqref{deviation_actuation},\eqref{operator_controlfunction} and \eqref{tilde_deviation}, the deviation $d(t)$ can be expressed as follows:
\begin{equation}\label{d(t)_in_termsof_Tilde_d}
d(t) = \tilde{d}(t) + \int_{0}^{1}(k(y)-g(y))(u(t_{j},y) - u(t,y))dy 
\end{equation} 
Hence,  combining \eqref{estimate_tilded3} and \eqref{d(t)_in_termsof_Tilde_d} we can obtain an estimate of $d$ as follows:  
\begin{equation}\label{esdtimateod_d_withnormk-g}
\begin{split}
\vert d(t) \vert & \leq (t -t_{j})\theta \Vert k \Vert \Vert u[t_{j}] \Vert F_N + (t-t_{j})\sup_{t_{j} \leq s \leq t}(\Vert u[s] \Vert) G_N  \\
& \hskip 2 cm+ \Vert k-g \Vert\Vert u[t_{j}] \Vert + \Vert k-g \Vert\Vert u[t] \Vert
\end{split}
\end{equation}
Using \eqref{esdtimateod_d_withnormk-g} and assuming that an event is triggered  at $t=t_{j+1}$, we have
\begin{equation}\label{estimate_d_at_tk+1}
\begin{split}
\vert d(t_{j+1}) \vert & \leq (t_{j+1} -t_{j})\theta \Vert k \Vert \Vert u[t_{j}] \Vert F_N \\
& \hskip 2 cm + (t_{j+1}-t_{j})\sup_{t_{j} \leq s \leq t_{j+1}}(\Vert u[s] \Vert) G_N  \\
& \hskip 2 cm+ \Vert k-g \Vert\Vert u[t_{j}] \Vert + \Vert k-g \Vert\Vert u[t_{j+1}] \Vert
\end{split}
\end{equation}
and,  if $u[t_{j}] \neq 0$, by  Definition \ref{Definition_event_based_controller},  we have that, at $t=t_{j+1}$
\begin{equation}\label{d_at_tk+1}
\vert d(t_{j+1}) \vert \geq  \beta \Vert k \Vert \Vert u[t_{j}] \Vert + \beta \Vert k \Vert \Vert u[t_{j+1}] \Vert 	
\end{equation}
Combining \eqref{estimate_d_at_tk+1} and \eqref{d_at_tk+1}, we get
\begin{equation*}
\begin{split}
&\beta\Vert k \Vert \Vert u[t_{j}] \Vert  + \beta \Vert k \Vert \Vert u[t_{j+1}] \Vert   \\
&  \leq (t_{j+1} -t_{j})\theta \Vert k \Vert F_N \Vert u[t_{j}] \Vert  + (t_{j+1}-t_{j})G_N  \sup_{t_{j} \leq s \leq t_{j+1}}(\Vert u[s] \Vert)  \\
&  \hskip 2cm + \Vert k-g \Vert\Vert u[t_{j}] \Vert + \Vert k-g \Vert\Vert u[t_{j+1}] \Vert
\end{split}
\end{equation*}
therefore,
\begin{equation*}
\begin{split}
 & \left(\beta\Vert k \Vert  -  \Vert k-g \Vert  \right) \Vert u[t_{j+1}] \Vert +  \left( \beta \Vert k \Vert  -  \Vert k-g \Vert  \right) \Vert u[t_{j}] \Vert   \\
  &  \leq (t_{j+1} -t_{j})\theta \Vert k \Vert F_N \Vert u[t_{j}] \Vert     +   (t_{j+1}-t_{j})G_N\sup_{t_{j} \leq s \leq t_{j+1}}(\Vert u[s] \Vert)
\end{split}
\end{equation*}
We select $N \geq 1$ in \eqref{definition_g_modaldecopsition} sufficiently large so  that $\Vert k -g \Vert < \beta\Vert k \Vert$. 
 Notice that we can always find $N$ sufficiently large so that the condition $\Vert k -g \Vert < \beta\Vert k \Vert$, since $g$ is simply the $N$-mode truncation of $k$ (which implies that $\Vert k-g\Vert$ tends to zero as $N$ tends to infinity).
In addition, using the fact that $\Vert u[t_{j+1}] \Vert \geq 0$ and  by \eqref{upper_bound_supNormofU} in Lemma \ref{Estimate_of_supNorm}, we obtain the following estimate:
\begin{equation*}
\begin{split}
& \left( \beta\Vert k \Vert  -  \Vert k-g \Vert  \right) \Vert u[t_{j}] \Vert    \\
 &  \leq (t_{j+1} -t_{j})\theta \Vert k \Vert F_N \Vert u[t_{j}] \Vert + (t_{j+1}-t_{j})G_N Q \Vert u[t_{j}] \Vert    \\
\end{split}
\end{equation*} 
where $Q = e^{p/2(t_{j+1}- t_{j})}(1+  \tfrac{\sqrt{3}}{3} \Vert k \Vert + \frac{\Vert k \Vert}{\sqrt{p}}) +  \tfrac{\sqrt{3}}{3} \Vert k \Vert$ and $p = -2\theta \pi^2 + 2 \lambda + \frac{1}{3}\lambda^2 $. 
Denoting 
\begin{itemize}
\item $a_0:= \beta\Vert k \Vert  -  \Vert k-g \Vert $
\item $a_1:= \theta \Vert k \Vert F_N + G_N  \tfrac{\sqrt{3}}{3} \Vert k \Vert$
\item $a_2:= G_N (1 +  \tfrac{\sqrt{3}}{3} \Vert k \Vert + \tfrac{\Vert k \Vert}{\sqrt{p}})$
\end{itemize}
we obtain an inequality of the form:
\begin{equation}\label{inequality_for_dwell_time_LambertW}
a_0 \leq a_1 (t_{j+1} - t_{j}) + a_2 (t_{j+1} - t_{j}) e^{p/2 (t_{j+1} - t_{j})}
\end{equation}
from which we aim at finding a lower bound for $(t_{j+1} - t_{j})$. Note that the right hand side of \eqref{inequality_for_dwell_time_LambertW} is a $\mathcal{K}_{\infty}$ function of $(t_{j+1} - t_{j})$. Let us denote it as   $\alpha(s):= a_1 s + a_2 s e^{p/2 s}  $ with $s= (t_{j+1} - t_{j})$. The solution of inequality \eqref{inequality_for_dwell_time_LambertW} is then $ s \geq \alpha^{-1}(a_0)$ where $\alpha^{-1}$ is the inverse of $\alpha$. Since $a_0$ is strictly positive, then there exists $\tau >0$ such that $s= (t_{j+1} - t_{j})\geq \tau > 0$. \\
If $u[t_{j}] = 0$, then Lemma \ref{Estimate_of_supNorm} guarantees that $u[t]$ remains zero. In this case, by Definition \ref{Definition_event_based_controller},  one would not need to trigger anymore and thus Zeno phenomenon is immediately   excluded.  It concludes the proof.
\end{proof} 

\begin{remark}[Explicit dwell-time]
We can upper bound the right-hand side of \eqref{inequality_for_dwell_time_LambertW} such that:

\begin{equation}\label{inequality_for_dwell_time_LambertW2}
a_0 \leq (a_1 +a_2 ) (t_{j+1} - t_{j})  e^{p/2 (t_{j+1} - t_{j})}
\end{equation}
which turns out to be more conservative (thus,   one can expect  solutions  of \eqref{inequality_for_dwell_time_LambertW} to take smaller values). Furthermore, by 
rewriting  \eqref{inequality_for_dwell_time_LambertW2} yields,

\begin{equation}\label{inequality_for_dwell_time_LambertW3}
 \frac{p a_0}{2( a_1  + a_2  )} \leq \tfrac{p}{2}(t_{j+1} - t_{j}) e^{p/2 (t_{j+1} - t_{j})}
\end{equation}
so that the right-hand side corresponds to a transcendental function whose solution can be found using the so-called \textit{ Lambert W  function} \footnote{To the best of our knowledge, in control theory, Lambert W functions have been used  within the framework of time-delay systems (see e.g. \cite{Yi2010}).} (see e.g. \cite{Corless1996} for more details).\\
 Hence,  we have
\begin{equation}\label{inequality_for_dewell_time_LambertW4}
(t_{j+1} - t_{j}) \geq \frac{2}{p}W\left(\frac{p a_0}{2(a_1+a_2)}\right)
\end{equation} 
Note that the argument $\tfrac{p a_0}{2(a_1+a_2)}$ of the Lambert W function  is strictly positive yielding  $W(\cdot)$ to take a strictly positive value. We denote then $\tau  := \frac{2}{p} W \left( \tfrac{p a_0}{2(a_1+a_2)} \right) > 0$ being $\tau$ the minimal dwell-time between two triggering times,  i.e. $t_{j+1} - t_{j} \geq \tau$ for all $j\geq 0$. 
 \end{remark}
\vskip 0.5cm 
\begin{remark}\label{Remark_aboutperiodicscheeme}
It is worth remarking that if a periodic sampling scheme - where the the control value is updated periodically on a sampled-and-hold manner - is intended to be applied to  stabilize the system \eqref{eq:sysparabolic}-\eqref{IC_parabolic_PDE_u}, then   a period can be chosen according to \cite{KARAFYLLIS2018226}. However,  an alternative   way of choosing a suitable period while meeting the theoretical guarantees,  is precisely  by using the minimal dwell-time $\tau$  that was obtained from  \eqref{inequality_for_dwell_time_LambertW} and its explicit form by  the Lambert W function as in \eqref{inequality_for_dewell_time_LambertW4}.  Unfortunately,  as one may expect, such a dwell-time may be very small and conservative (similar to the sampling period obtained  in \cite{KARAFYLLIS2018226}  since the derivation was done using small gain arguments as well). This issue, however,   supports  the main motivation highlighted throughout the paper: stabilize on events only when is required and in an more efficient way. Event-triggered control offers  advantages with respect period schemes as it  reduces execution times and while meeting  theoretical guarantees. 
\end{remark}

Since  there is a minimal dwell-time (which is uniform and does not depend on  either  the initial condition or on the state of the system), no Zeno solution can appear.  Consequently,  the following result on  the existence of solutions in  of the system system \eqref{eq:sysparabolic}-\eqref{IC_parabolic_PDE_u} with  \eqref{triggering_conditionISS_with_backstepping_original}-\eqref{operator_controlfunction}, holds for all $ t \in \mathbb{R}_{+}$.\\

\begin{corollary}\label{existence_and_continuity_solutions_on_R}
For every $u_{0} \in L^2(0,1)$, there exist a unique solution $u \in  \mathcal{C}^{0}(\mathbb{R}_{+}; L^{2}(0,1))$ to the 
system \eqref{eq:sysparabolic}-\eqref{IC_parabolic_PDE_u}, \eqref{triggering_conditionISS_with_backstepping_original},\eqref{operator_controlfunction}  satisfying  $u \in C^{1}(I \times [0,1])$, $u[t] \in C^2([0,1])$ for all $t > 0 $ where $I = \mathbb{R}_{+} \backslash \{ t_{j} \geq 0, k= 0,1,2,...\} $
\end{corollary}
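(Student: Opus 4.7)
The plan is to construct the global solution by concatenating the local solutions produced by Proposition \ref{existence_and_continuity_solutions} on each inter-event interval, and then to invoke Theorem \ref{theo:minimal_dwellt_time} to rule out finite-time accumulation of event times.

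First, I would proceed inductively. Starting from $t_0 = 0$ with $u[t_0] = u_0 \in L^2(0,1)$, the boundary feedback $U_d$ defined in \eqref{operator_controlfunction} is a constant (in time) on $[t_0, t_1)$, so Proposition \ref{existence_and_continuity_solutions} yields a unique $u \in C^0([t_0, t_1]; L^2(0,1))$ satisfying $u \in C^1((t_0, t_1)\times[0,1])$ and $u[t]\in C^2([0,1])$ for all $t \in (t_0, t_1]$. In particular $u[t_1] \in L^2(0,1)$, so the triggering rule in Definition \ref{Definition_event_based_controller} can be evaluated and, if a new event occurs, yields a well-defined next instant $t_1$; the boundary input on $[t_1, t_2)$ is again a constant determined by $u[t_1]$. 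Iterating this argument produces a unique piecewise-defined solution on $[0, \sup_{j}t_j)$ which, by construction, is continuous at each $t_j$ because the two $C^0([t_j,t_{j+1}];L^2(0,1))$ pieces match at the endpoints.

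Second, I would invoke Theorem \ref{theo:minimal_dwellt_time}: either case (a) of Definition \ref{Definition_event_based_controller} occurs at some finite index (in which case $U_d$ is eventually constant and Proposition \ref{existence_and_continuity_solutions} trivially extends the solution to all of $\R_{+}$ by considering the half-line $[t_j, +\infty)$ as a single interval), or case (b) always occurs and one obtains an infinite sequence $(t_j)_{j\in \NN}$ with $t_{j+1}-t_j \geq \tau > 0$, which forces $t_j \to +\infty$. In either situation $\sup_j t_j = +\infty$, so the solution is defined on the entire half-line $\R_{+}$.

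Third, the regularity claims transfer directly: on each open interval $(t_j, t_{j+1})$ Proposition \ref{existence_and_continuity_solutions} provides $u \in C^1((t_j, t_{j+1})\times[0,1])$ and $u[t] \in C^2([0,1])$, so taking the union one gets $u \in C^1(I \times [0,1])$ with $I = \R_{+} \setminus \{t_j : j \in \NN\}$ and $u[t]\in C^2([0,1])$ for all $t > 0$ (away from the event times). Uniqueness of the global solution is inherited piecewise from the uniqueness clause of Proposition \ref{existence_and_continuity_solutions}. There is essentially no hard step here: the entire argument is a bookkeeping exercise that glues together the local well-posedness result of Proposition \ref{existence_and_continuity_solutions} with the uniform dwell-time of Theorem \ref{theo:minimal_dwellt_time}; the only point worth double-checking is the absence of Zeno behaviour, which is exactly what Theorem \ref{theo:minimal_dwellt_time} rules out.
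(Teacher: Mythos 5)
Your argument is correct and is essentially the paper's own proof, only written out in more detail: the paper likewise constructs the solution by the step method, concatenating the local solutions from Proposition \ref{existence_and_continuity_solutions} across successive triggering times and using the uniform dwell-time of Theorem \ref{theo:minimal_dwellt_time} to exclude Zeno accumulation so that the construction covers all of $\mathbb{R}_{+}$. No changes needed.
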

\begin{proof}
It is an immediate  consequence of Proposition \ref{existence_and_continuity_solutions} and Theorem \ref{theo:minimal_dwellt_time}. Indeed, the solution is constructed  (by the step method) iteratively between successive triggering times.  
\end{proof}

\subsection{\textbf{Stability result}}

In this subsection,  we  are going to follow small-gain arguments and seek for an Input-to-State stability property with respect to the deviation $d(t)$. \\

\begin{Lem}[ISS of the target system]\label{ISS_target_system}

The target system \eqref{Target_system_w}-\eqref{IC_Target_system_w} is ISS with respect to $d(t)$; more precisely, the following estimate holds:
\begin{equation}\label{ISS_estimate_targetsystem}
\Vert w[t] \Vert \leq G e^{- \sigma t} \Vert w[0] \Vert + \gamma  \sup_{0\leq s \leq t}\left(\vert d(s) \vert e^{-\sigma(t-s)}\right)
\end{equation}
where   $\sigma \in (0, \mu_1)$ with  $\mu_1=\pi^2\theta +c$,  $G := \sqrt{(1+b^{-1})}$, for arbitrary $b >0$ and  the gain $\gamma$  is given as follows:

\begin{equation}\label{small_gain_parameter}
\gamma:= \sqrt{(1+b)} \begin{cases}
\left( \frac{\pi^2\theta + c}{\pi^2\theta + c - \sigma} \right)       \frac{\left( \sinh\left(2\frac{\sqrt{c}}{\sqrt{\theta}}\right) - 2 \frac{\sqrt{c}}{\sqrt{\theta}} \right)}{2\sinh\left(\frac{\sqrt{c}}{\sqrt{\theta}}\right)\left(\frac{c}{\theta}\right)^{1/4}},  \quad \textit{if} \quad c  \neq 0 \\
  \frac{1}{\sqrt{3}}\left( \frac{\pi^2\theta }{\pi^2\theta  - \sigma} \right),   \quad \textit{if} \quad c  = 0
\end{cases}
\end{equation}

  \end{Lem}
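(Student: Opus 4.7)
Because the disturbance $d(t)$ enters through a Dirichlet boundary condition, no clean Lyapunov function is available; the ISS estimate is instead obtained via modal analysis in the Dirichlet eigenbasis $\phi_n(x)=\sqrt{2}\sin(n\pi x)$, with corresponding eigenvalues $\mu_n=n^2\pi^2\theta+c$ of the Sturm--Liouville operator driving \eqref{Target_system_w}. Set $w_n(t):=\int_0^1 w(t,x)\phi_n(x)\,dx$. Multiplying \eqref{Target_system_w} by $\phi_n$, integrating twice by parts and using $\phi_n(0)=\phi_n(1)=0$ together with $\phi_n'(1)=\sqrt{2}\,n\pi(-1)^n$ and the nonhomogeneous BC $w(t,1)=d(t)$, I obtain the scalar ODE
\begin{equation*}
\dot w_n(t) = -\mu_n w_n(t) + \mu_n\psi_n\,d(t), \qquad \psi_n:=\frac{\sqrt{2}\,n\pi\theta\,(-1)^{n+1}}{\mu_n},
\end{equation*}
where the $\psi_n$ are precisely the Fourier coefficients of the lifting function $\psi\in C^2([0,1])$ defined as the unique solution of $\theta\psi''-c\psi=0$, $\psi(0)=0$, $\psi(1)=1$, i.e.\ $\psi(x)=\sinh(\sqrt{c/\theta}\,x)/\sinh(\sqrt{c/\theta})$ for $c\neq 0$ and $\psi(x)=x$ for $c=0$.

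Variation of constants then gives the convolution representation
\begin{equation*}
w_n(t)=e^{-\mu_n t}w_n(0)+\mu_n\psi_n\!\int_0^t e^{-\mu_n(t-s)}d(s)\,ds.
\end{equation*}
For any $\sigma\in(0,\mu_1)$ one has $\mu_n-\sigma>0$ for every $n$ and
\begin{equation*}
\int_0^t e^{-\mu_n(t-s)}|d(s)|\,ds \;\le\; \frac{1}{\mu_n-\sigma}\sup_{0\le s\le t}\bigl(|d(s)|\,e^{-\sigma(t-s)}\bigr),
\end{equation*}
while the monotonicity $\mu_n/(\mu_n-\sigma)\le \mu_1/(\mu_1-\sigma)$ (valid since $\sigma<\mu_1\le\mu_n$) lets me factor out the smallest eigenvalue. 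This yields the modewise bound
\begin{equation*}
|w_n(t)|\le e^{-\mu_n t}|w_n(0)|+\frac{\mu_1}{\mu_1-\sigma}\,|\psi_n|\sup_{0\le s\le t}\bigl(|d(s)|e^{-\sigma(t-s)}\bigr).
\end{equation*}

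Applying the elementary inequality $(p+q)^2\le(1+b^{-1})p^2+(1+b)q^2$ for arbitrary $b>0$, summing over $n$ via Parseval's identity, and using $e^{-2\mu_n t}\le e^{-2\mu_1 t}\le e^{-2\sigma t}$, I arrive at
\begin{equation*}
\|w[t]\|^2 \le (1+b^{-1})e^{-2\sigma t}\|w[0]\|^2 + (1+b)\Bigl(\tfrac{\mu_1}{\mu_1-\sigma}\Bigr)^{\!2}\|\psi\|^2\sup_{0\le s\le t}\bigl(|d(s)|e^{-\sigma(t-s)}\bigr)^{2}.
\end{equation*}
Taking the square root with $\sqrt{p+q}\le\sqrt{p}+\sqrt{q}$ produces \eqref{ISS_estimate_targetsystem} with $G=\sqrt{1+b^{-1}}$ and $\gamma=\sqrt{1+b}\,\tfrac{\mu_1}{\mu_1-\sigma}\,\|\psi\|$.

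The remaining step is the closed-form computation of $\|\psi\|$. For $c=0$ this is $\|\psi\|^2=\int_0^1 x^2\,dx=1/3$, which reproduces the second line of \eqref{small_gain_parameter}. For $c\ne 0$, using $\sinh^2(\alpha x)=(\cosh(2\alpha x)-1)/2$ with $\alpha=\sqrt{c/\theta}$, a direct integration gives $\|\psi\|^2=(\sinh(2\alpha)-2\alpha)/(4\alpha\sinh^2(\alpha))$, which after substitution matches the hyperbolic factor of \eqref{small_gain_parameter}. The only genuine subtlety in the whole argument is step one — making sure that the Fourier-mode ODE is driven by $d$ itself and not by $\dot d$ (which would be problematic because $d$ is only piecewise continuous between triggering times); this is why the two integrations by parts and the identification $\psi_n=\sqrt{2}n\pi\theta(-1)^{n+1}/\mu_n$ are the technical heart of the proof, while everything downstream is routine estimation.
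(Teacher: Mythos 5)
Your proof is correct and is essentially the argument the paper relies on: the paper itself gives no proof of Lemma~\ref{ISS_target_system}, deferring to the appendix of \cite{KARAFYLLIS2018226}, which establishes the ISS estimate by exactly this modal decomposition in the Dirichlet eigenbasis (the route the authors describe in the introduction as ``modal decomposition and Fourier series analysis'' in lieu of an ISS--Lyapunov function). Your projection step correctly produces an ODE driven by $d$ rather than $\dot d$ (via $\mu_n\psi_n=-\theta\phi_n'(1)$), and the subsequent estimates, Parseval summation, and identification of $\psi_n$ as the Fourier coefficients of the lifting function $\psi$ are all sound. One remark worth recording: your computation gives, for $c\neq 0$ and $\alpha=\sqrt{c/\theta}$, the factor $\Vert\psi\Vert=\bigl(\sinh(2\alpha)-2\alpha\bigr)^{1/2}\big/\bigl(2\,\alpha^{1/2}\sinh\alpha\bigr)$, i.e.\ with an exponent $1/2$ on $\sinh(2\alpha)-2\alpha$ that is absent from the displayed formula \eqref{small_gain_parameter}; your version is the one consistent with the $c=0$ case, since it tends to $1/\sqrt{3}=\Vert x\Vert_{L^2(0,1)}$ as $c\to 0$ while the formula as printed tends to $0$. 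This indicates a typographical omission in \eqref{small_gain_parameter} rather than an error in your derivation.
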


\begin{proof}
See  \cite[Appendix]{KARAFYLLIS2018226}.
\end{proof}

\begin{Theo}\label{main_theorem_EBC_backstepping}

Let $\tilde{L} := 1+ \left(\int_{0}^{1} \left( \int_{0}^{x}\vert L(x,y) \vert^2 dy \right) dx \right)^{1/2} $ with $L$ satisfying \eqref{solutionKernelexplicit_inverse} and  $k(y)=K(1,y)$ with $K$ satisfying \eqref{solutionKernelexplicit}.  Let $\gamma$ be given as in  Lemma \ref{ISS_target_system}.   Let $\beta>0 $ be as  in \eqref{triggering_conditionISS_with_backstepping_original}. If   the following condition is fulfilled,
\begin{equation}\label{coondition_for_small_gain}
 \Phi_e:= 2\beta \gamma \Vert k \Vert \tilde{L}  < 1
\end{equation} 
then,  the  closed-loop system  \eqref{eq:sysparabolic}-\eqref{IC_parabolic_PDE_u}  with event-triggered boundary control    \eqref{triggering_conditionISS_with_backstepping_original}-\eqref{operator_controlfunction} has a unique solution and is globally exponentially stable; i.e. there exist $M, \sigma >0$ such that  for every $u_0 \in L^2(0,1)$ the unique mapping $u \in  \mathcal{C}^{0}(\mathbb{R}_{+}; L^{2}(0,1))$ satisfying  $u \in C^{1}(I \times [0,1])$, $u[t] \in C^2([0,1])$ for all $t > 0 $ where $I = \mathbb{R}_{+} \backslash \{ t_{j} \geq 0, k= 0,1,2,...\} $
satisfies:
\begin{equation}\label{GES}
\Vert u[t] \Vert \leq M e^{-\sigma t} \Vert u[0] \Vert, \quad \text{for all} \quad t\geq 0
\end{equation}
 \end{Theo}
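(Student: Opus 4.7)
The plan is to couple the fading-memory ISS bound of Lemma 2 for the target $w$-system to the actuation error bound supplied by the event-trigger of Definition 1, and to close the feedback loop using the small-gain hypothesis $\Phi_e<1$.

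I would proceed as follows. First, apply the inverse Volterra transformation (5)-(6): Minkowski together with Cauchy-Schwarz yields $\Vert u[t]\Vert\leq \tilde L\Vert w[t]\Vert$ for every $t\geq 0$, and the direct transformation symmetrically gives $\Vert w[0]\Vert\leq \tilde K\Vert u[0]\Vert$ for some $\tilde K=\tilde K(K)$. Second, from the triggering rule in Definition 1 deduce $\vert d(t)\vert\leq \beta\Vert k\Vert\Vert u[t]\Vert+\beta\Vert k\Vert\Vert u[t_j(t)]\Vert$ for all $t\geq 0$, with $t_j(t):=\max\{t_j\leq t\}$ the last triggering instant; combined with the inverse-transform estimate, $\vert d(s)\vert\leq \beta\Vert k\Vert\tilde L\bigl(\Vert w[s]\Vert+\Vert w[t_j(s)]\Vert\bigr)$. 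Third, substitute this into Lemma 2. Using $e^{-\sigma(t-s)}\leq 1$ and dominating both $\Vert w[s]\Vert$ and $\Vert w[t_j(s)]\Vert$ by the running supremum $V(t):=\sup_{0\leq s\leq t}\Vert w[s]\Vert$ (valid because $t_j(s)\leq s\leq t$), one arrives at
\[
\Vert w[t]\Vert\leq G e^{-\sigma t}\Vert w[0]\Vert + 2\beta\gamma\Vert k\Vert\tilde L\, V(t)=G e^{-\sigma t}\Vert w[0]\Vert + \Phi_e V(t).
\]
Taking the supremum over $[0,t]$ and invoking $\Phi_e<1$ yields the uniform bound $V(t)\leq G\Vert w[0]\Vert/(1-\Phi_e)$. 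To upgrade boundedness to exponential decay, I would re-run the argument with the exponentially weighted supremum $V_{\sigma_0}(t):=\sup_{0\leq s\leq t}(e^{\sigma_0 s}\Vert w[s]\Vert)$ at a rate $\sigma_0\in(0,\sigma)$; after multiplying the ISS estimate by $e^{\sigma_0 t}$ and using the identity $e^{\sigma_0 t}e^{-\sigma(t-s)}=e^{\sigma_0 s}e^{-(\sigma-\sigma_0)(t-s)}\leq e^{\sigma_0 s}$, the same bookkeeping produces $V_{\sigma_0}(t)\leq G\Vert w[0]\Vert/(1-\Phi_e)$. Combining with the norm equivalence between $u$ and $w$ yields (12) with $M=G\tilde L\tilde K/(1-\Phi_e)$ and $\sigma=\sigma_0$. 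Existence and uniqueness of $u\in \mathcal C^0(\mathbb R_+;L^2(0,1))$ in the declared regularity class is already provided by Corollary 1.

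The principal technical obstacle is the past-state evaluation $\Vert u[t_j(s)]\Vert$ in the bound on $\vert d(s)\vert$: since $t_j(s)$ is a triggering instant strictly in the past of $s$, this term does not slot cleanly into the fading-memory supremum $\sup_{0\leq s\leq t}(\vert d(s)\vert e^{-\sigma(t-s)})$ on the right-hand side of Lemma 2 (in particular, $\Vert w[t_j(s)]\Vert e^{-\sigma(t-s)}$ need not be bounded by the fading-memory norm of $w$). My proposed workaround is to dominate $\Vert w[t_j(s)]\Vert$ by the running supremum $V(s)$ (respectively $V_{\sigma_0}(s)$ in the weighted version), accepting the factor of $2$ that appears explicitly in $\Phi_e=2\beta\gamma\Vert k\Vert\tilde L$. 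The hypothesis $\Phi_e<1$ is precisely calibrated to absorb this doubled coefficient and close the small gain, while the choice $\sigma_0<\sigma$ provides the lever for converting the uniform boundedness into genuine exponential decay without eroding the small-gain margin.
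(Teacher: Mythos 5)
Your proposal is correct and follows essentially the same small-gain route as the paper's proof: the trigger bound $\vert d(t)\vert\leq\beta\Vert k\Vert(\Vert u[t]\Vert+\Vert u[t_j]\Vert)$, the ISS estimate of Lemma \ref{ISS_target_system}, the norm equivalences $\Vert u[t]\Vert\leq\tilde L\Vert w[t]\Vert$ and $\Vert w[0]\Vert\leq\tilde K\Vert u[0]\Vert$, and closure of the loop via $\Phi_e<1$. The only differences are cosmetic --- the paper runs a single pass directly on $\sup_{0\leq s\leq t}(e^{\sigma s}\Vert u[s]\Vert)$ at the full rate $\sigma$, so your preliminary unweighted boundedness step and the margin $\sigma_0<\sigma$ are not needed --- and the lagged term $\Vert u[t_j(s)]\Vert e^{\sigma s}$ that you flag as the principal obstacle is absorbed into the weighted supremum in exactly the same (slightly informal) way in the paper's inequality \eqref{estimateofsup-d}.
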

\begin{proof}
By Corollary \ref{existence_and_continuity_solutions_on_R}, the existence and uniqueness of a solution to the system \eqref{eq:sysparabolic}-\eqref{IC_parabolic_PDE_u}  with event-triggered boundary control    \eqref{triggering_conditionISS_with_backstepping_original}-\eqref{operator_controlfunction} hold. Let us show that the system is globally exponential stable in  the $L^2$-norm sense. \\

 It follows from \eqref{triggering_conditionISS_with_backstepping_original}  that the following inequality holds for all $t \in [t_{j},t_{j+1})$:
\begin{equation}\label{estimateofd2}
\vert d(t) \vert \leq \beta \Vert k \Vert \Vert u[t_{j}] \Vert + \beta \Vert k \Vert \Vert u[t] \Vert
\end{equation}
Therefore, inequality \eqref{estimateofd2} implies the following inequality for all $t\geq 0$:

\begin{equation}\label{estimateofsup-d}
\sup_{0\leq s \leq t} \left( \vert d(s)\vert  e^{\sigma s}  \right) \leq 2 \beta \Vert k \Vert \sup_{0 \leq s \leq t} \left( \Vert u[s] \Vert e^{\sigma s} \right) 
\end{equation}
On the other hand, by Lemma \ref{ISS_target_system}, we have
\begin{equation}\label{ISS_target-system-in_main_theorem}
\Vert w[t] \Vert e^{\sigma t} \leq G \Vert w[0] \Vert + \gamma \sup_{0\leq s \leq t}\left(\vert d(s) \vert e^{\sigma s}\right)
\end{equation}
 The following estimate is a consequence of \eqref{ISS_target-system-in_main_theorem}: 
\begin{equation}\label{estimateofsup-d2}
\sup_{0\leq s \leq t}\left( \Vert w[s] \Vert e^{\sigma s} \right) \leq G \Vert w[0] \Vert + \gamma \sup_{0\leq s \leq t}\left(\vert d(s) \vert e^{\sigma s}\right)
\end{equation}
Hence, combining \eqref{estimateofsup-d} with  \eqref{estimateofsup-d2}, we obtain 
\begin{equation*}
\sup_{0\leq s \leq t}\left( \Vert w[s] \Vert e^{\sigma s} \right) \leq G \Vert w[0] \Vert + 2 \beta  \gamma \Vert k \Vert \sup_{0 \leq s \leq t} \left( \Vert u[s] \Vert e^{\sigma s} \right) 
\end{equation*}
and using the fact $\Vert u[t] \Vert \leq \tilde{L} \Vert w[t] \Vert$, we get
\begin{equation}\label{estimateofsup-d3}
\sup_{0\leq s \leq t}\left( \Vert w[s] \Vert e^{\sigma s} \right) \leq G \Vert w[0] \Vert + \Phi_e \sup_{0 \leq s \leq t} \left( \Vert w[s] \Vert e^{\sigma s} \right) 
\end{equation}
where
\begin{equation}\label{Phi_event2}
\Phi_e:= 2 \beta  \gamma \Vert k \Vert \tilde{L}
\end{equation}
 Notice that, by virtue of \eqref{coondition_for_small_gain}, it holds that 
 $ \Phi_e < 1$. Thereby,  using the estimate of the backstepping transformation, i.e. $\Vert w[t] \Vert \leq \tilde{K} \Vert u[t] \Vert$ with $\tilde{K} := 1+ \left(\int_{0}^{1} \left( \int_{0}^{x}\vert K(x,y) \vert^2 dy \right) dx \right)^{1/2} $ and $K$ satisfying \eqref{solutionKernelexplicit},   we obtain  from \eqref{estimateofsup-d3} and \eqref{Phi_event2}  the following estimate for the solution  to the closed-loop system \eqref{eq:sysparabolic}-\eqref{IC_parabolic_PDE_u}  with event-triggered control    \eqref{triggering_conditionISS_with_backstepping_original}-\eqref{operator_controlfunction}:
\begin{equation*}
\sup_{0 \leq s \leq t} \left( \Vert u[s] \Vert e^{\sigma s} \right)  \leq G(1-\Phi_e)^{-1} \tilde{K}\tilde{L} \Vert u[0] \Vert
\end{equation*}
which leads to \eqref{GES}:
\begin{equation*}
\Vert  u[t] \Vert \leq M e^{-\sigma t}\Vert u[0] \Vert
\end{equation*}
with $M:= G(1-\Phi_e)^{-1} \tilde{K}\tilde{L}$. It concludes the proof.

\end{proof}

%
%

\section{Numerical simulations}\label{numerical_simulation}

We consider the reaction-diffusion system  with $\theta = c= 1$, $\lambda=\pi^2$ and 
initial condition $u_0(x) = \sum_{n=1}^3 \frac{\sqrt{2}}{n}\sin(n \pi x) + 3(x^2 - x^3)$, $x \in [0,1]$. For numerical simulations, the state of the system has been discretized by divided differences on a uniform grid with the step $h=0.01$ for the space variable. The discretization with respect to time was done using the implicit Euler scheme with step size $\Delta t=h^2 $.

We stabilize the system on events under  the event-triggered boundary control \eqref{triggering_conditionISS_with_backstepping_original}-\eqref{operator_controlfunction}  where   the parameter $\beta$ is selected such that condition  \eqref{coondition_for_small_gain} in Theorem \ref{main_theorem_EBC_backstepping}  is verified. In addition, $\tilde{L} =  1.8407$, $\Vert k \Vert = 5.61$ and $\gamma= 0.574 $ which is computed according to the information provided in Lemma \ref{ISS_target_system}. Therefore,  
  two cases are pointing out:  we  choose e.g. $\beta = 0.07$ and $\beta = 0.02$ yielding $\Phi_e = 0.83 <1$ and $\Phi_e = 0.23 <1$, respectively. In the former case,   $12$ events (updating times of the control) are obtained whereas in the later case, $47$ events are obtained. 
Figure \ref{component_solution} shows   the numerical solution of the closed-loop system \eqref{eq:sysparabolic}-\eqref{IC_parabolic_PDE_u}  with event-triggered control    \eqref{triggering_conditionISS_with_backstepping_original}-\eqref{operator_controlfunction} (on the left   $\beta=0.07$ and on the right when  $\beta=0.02$,  which results in slow and fast sampling, respectively).    The time-evolution of control functions  under the  event-triggered case is shown in  Figure \ref{control_functions} (orange line with black circle marker  for slow sampling   and  blue line with red circle marker for fast sampling).
\begin{figure*}[t]
\centering{
\subfigure{\includegraphics[width=1\columnwidth]{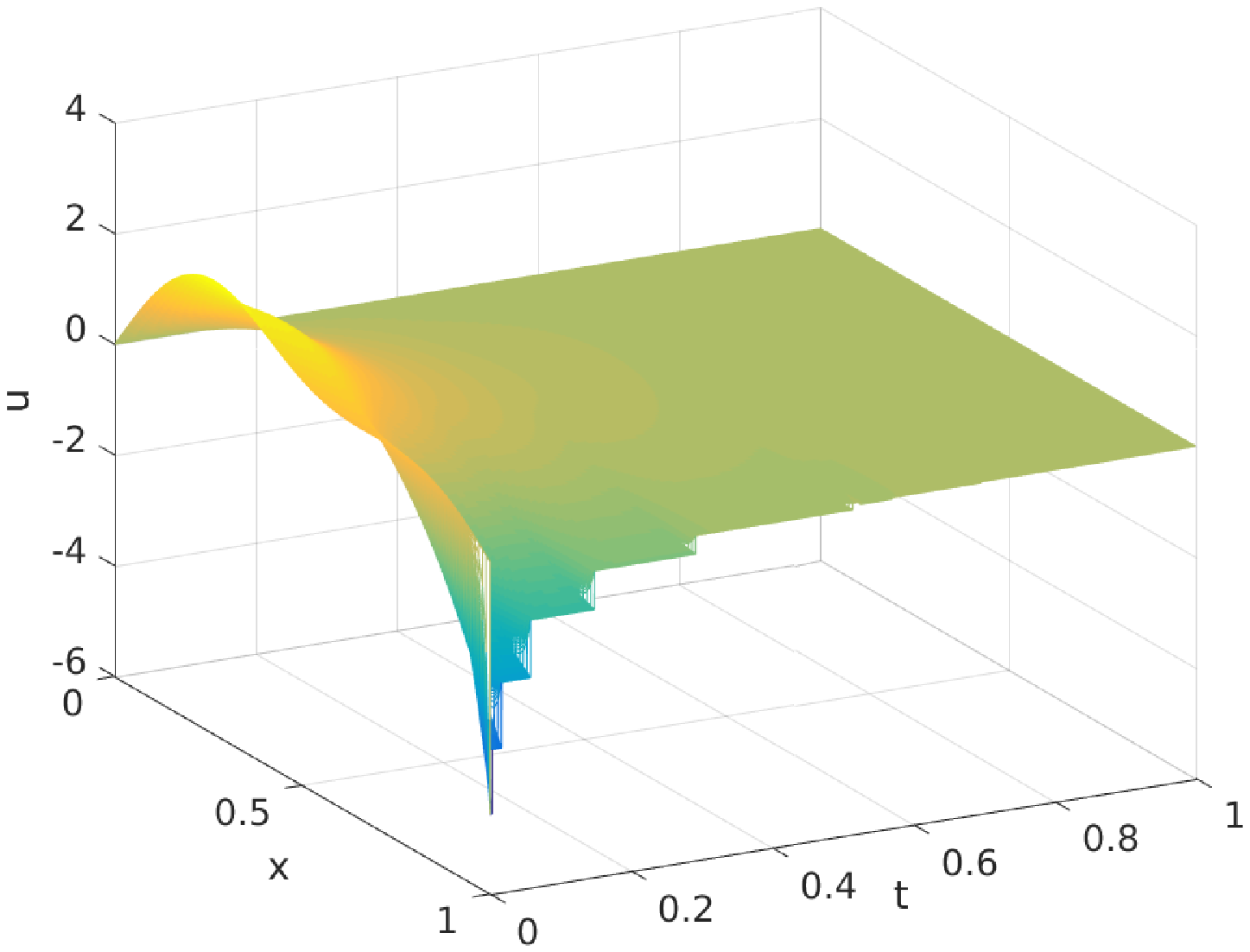} }
\subfigure{\includegraphics[width=1\columnwidth]{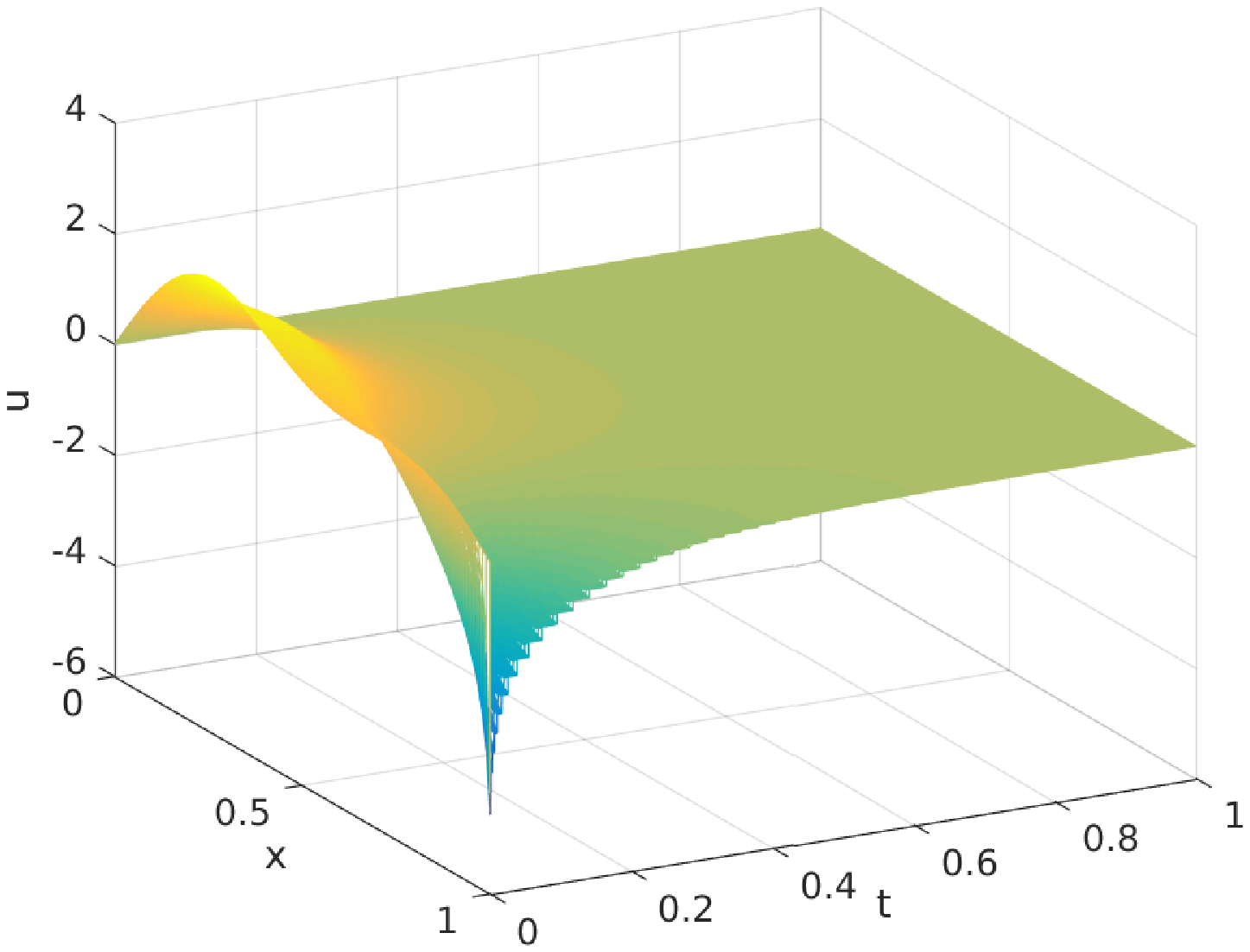} }
\caption{Numerical solutions of the closed-loop system \eqref{eq:sysparabolic}-\eqref{IC_parabolic_PDE_u} with $\theta = c= 1$, $\lambda=\pi^2$, initial condition $u_0(x) = \sum_{n=1}^3 \frac{\sqrt{2}}{n}\sin(n \pi x) + 3(x^2 - x^3)$, $x \in [0,1]$ and  under the event-triggered control    \eqref{triggering_conditionISS_with_backstepping_original}-\eqref{operator_controlfunction}. With  $\beta =0.07$ in \eqref{triggering_conditionISS_with_backstepping_original} the control updating is slower (closed-loop solution depicted on the left). With $\beta =0.02$ in \eqref{triggering_conditionISS_with_backstepping_original}, the control updating is faster (closed-loop solution depicted on the right).} 
\label{component_solution}
}
\end{figure*}
\begin{figure}[t]  
\centering
\subfigure{\includegraphics[width=1 \columnwidth]{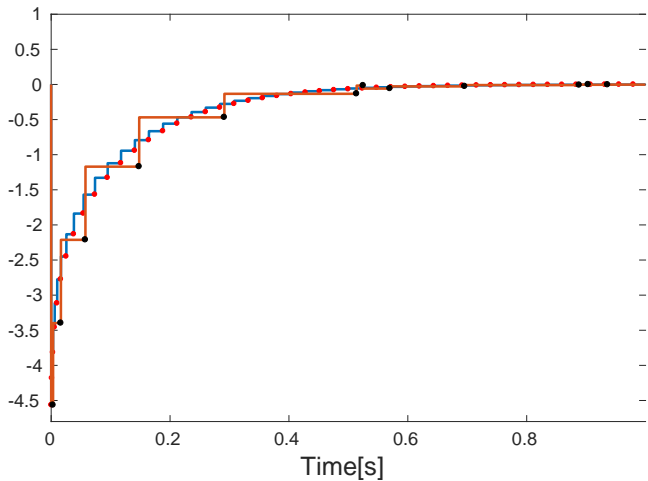}     }
\caption{Time-evolution of the event-triggered boundary control \eqref{triggering_conditionISS_with_backstepping_original}-\eqref{operator_controlfunction} (orange line with black circle marker  for slow control updating, i.e. $\beta=0.07$ in  \eqref{triggering_conditionISS_with_backstepping_original}  and  blue line with red circle marker for fast control updating, i.e. $\beta=0.02$ in  \eqref{triggering_conditionISS_with_backstepping_original}).}
\label{control_functions}
\end{figure}
In addition, Figure \ref{trajectories_triggering_condition} shows the  time evolution of the functions appearing in the
triggering condition \eqref{triggering_conditionISS_with_backstepping_original} (on the left with  $\beta=0.07$ and on the right with  $\beta=0.02$). Once the trajectory $ \vert d \vert$ reaches the trajectory $\beta \Vert  k \Vert \Vert u[t] \Vert +\beta \Vert  k \Vert \Vert u[t_{j}] \Vert $, an event is generated, the control value is updated
and $d$ is reset to zero.  It can be observed that   the lower  $\beta$ is, the faster the sampling and control updating which in turn implies smaller inter-executions times. This case turns out to be more conservative and the control function gets closer to the one in continuous case or  even when considering a periodic scheme with a very small period. As a matter of fact,    it is worth remarking that a sampling period can be computed from \cite[Section 3.3]{KARAFYLLIS2018226}. Indeed, for  the reaction-diffusion system  with a boundary control whose actuation is done in a sampled-and-hold fashion,  such a period would be $T = 9.96 \times 10^{-7}$.  
Notice that this is very small (even  smaller than the time step discretization for the current simulations); consequently   the periodic scheme turns out  not  be implementable. This is one of the reasons why  event-triggered boundary control offers advantages with respect to periodic schemes. In our framework, the control value is updated aperiodically and only when needed. 
 
\begin{figure*}[t]
\centering{
\subfigure{\includegraphics[width=1\columnwidth]{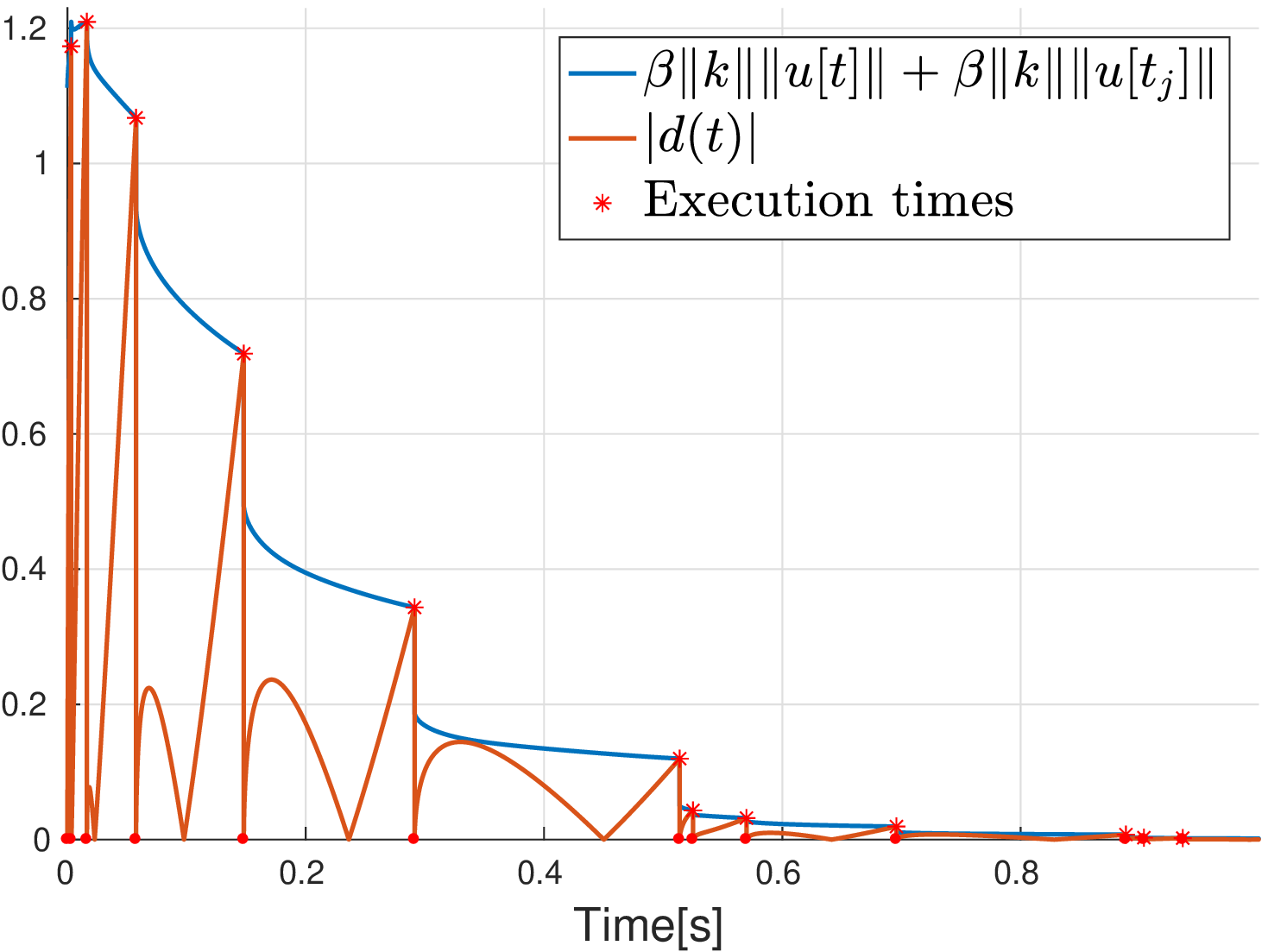} }
\subfigure{\includegraphics[width=1\columnwidth]{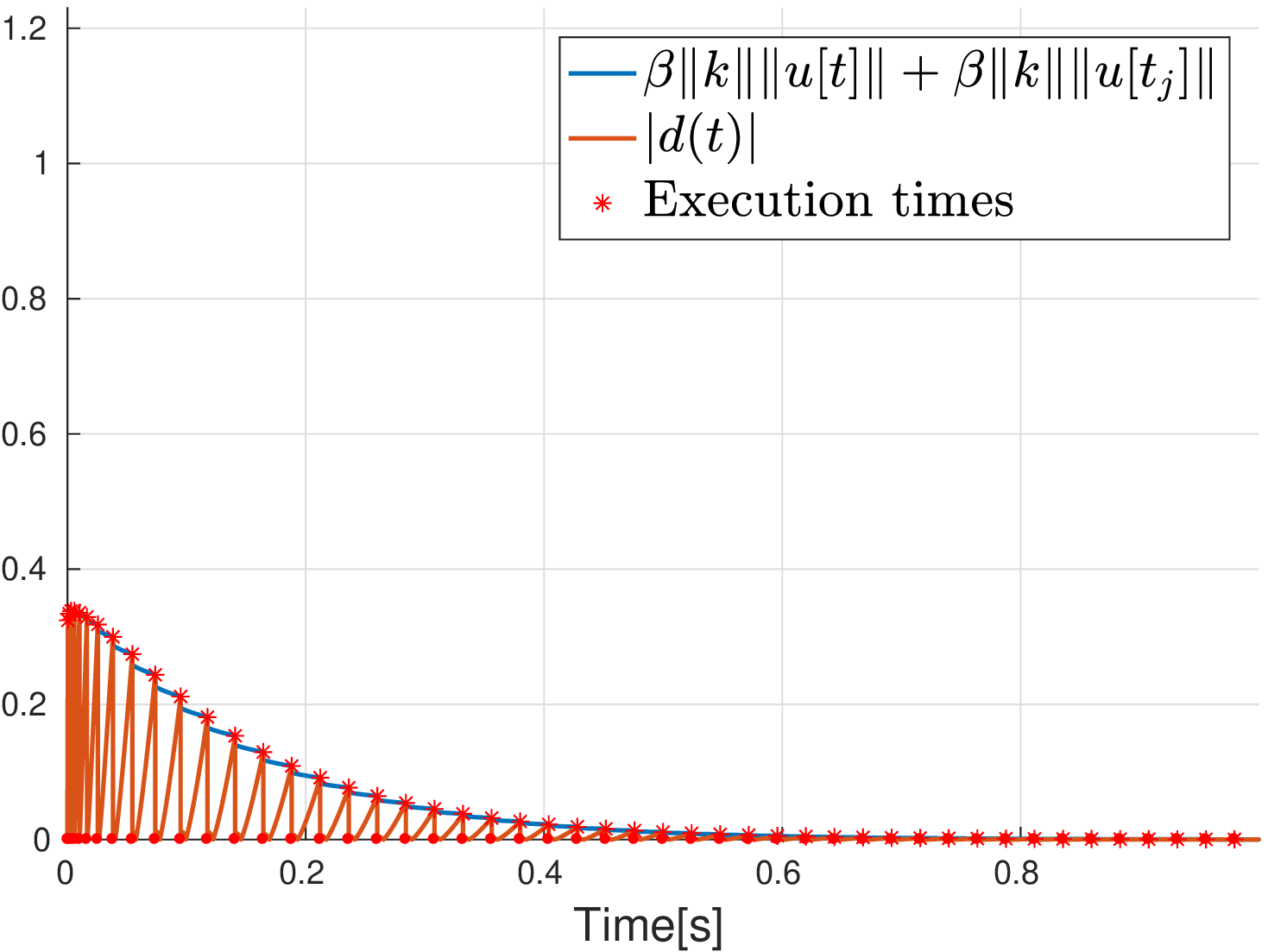} }
\caption{Trajectories involved in the triggering condition \eqref{triggering_conditionISS_with_backstepping_original} (on the left with  $\beta=0.07$ and on the right with  $\beta=0.02$, resulting in slow and fast sampling, respectively). Once the trajectory $ \vert d \vert$ reaches the trajectory $\beta \Vert  k \Vert \Vert u[t] \Vert +\beta \Vert  k \Vert \Vert u[t_{j}] \Vert $, an event is generated, the control value is updated
and $d$ is reset to zero. }
\label{trajectories_triggering_condition}
}
\end{figure*}
\vskip 0.5cm 
\noindent Finally, we run simulations for 100 different initial conditions given by  $u_{0}(x)  = \sum_{n=1}^l n^2\sqrt{2}\sin(n\pi x) + l(x^2 - x^3)$ for    $l=1,..,10$  and $u_{0}(x) = \sum_{n=1}^l n\sqrt{2}\sin(n^2\pi x) + l(x^2 - x^3)$, for  $l=11,...,100$ on a  frame of $1s$.  We have computed the inter-execution times between two triggering times. 
We compared the cases for slow and fast sampling, i.e. when $\beta = 0.07 $ and $\beta=0.02$, respectively. Figure \ref{histograms} shows the density of the inter-execution times plotted in logarithmic scale where  it can be observed that,  the larger $\beta$ the less often  is the sampling and control updating  which in turn implies larger inter-executions times.

It is interesting to notice that when choosing $\beta$ small (resulting in  fast sampling, as aforementioned), there are several inter-execution times  of the order of $10^{-1.7}$ as depicted in blue bars in Figure \ref{histograms} where the density predominates. It might suggest that a possible period (whenever one intends to sample periodically in a sampled-and-hold fashion) might be chosen with a   length of the order $10^{-1.7}$.  This issue is left for further  tests and investigation with possible theoretical connections with periodic schemes as in \cite{KARAFYLLIS2018226}. This issue may give some hints on how  to suitably choose sampling periods in order to reduce conservatism on periodic schemes.



\begin{figure*}[t]
\centering{
\includegraphics[width=0.9\textwidth]{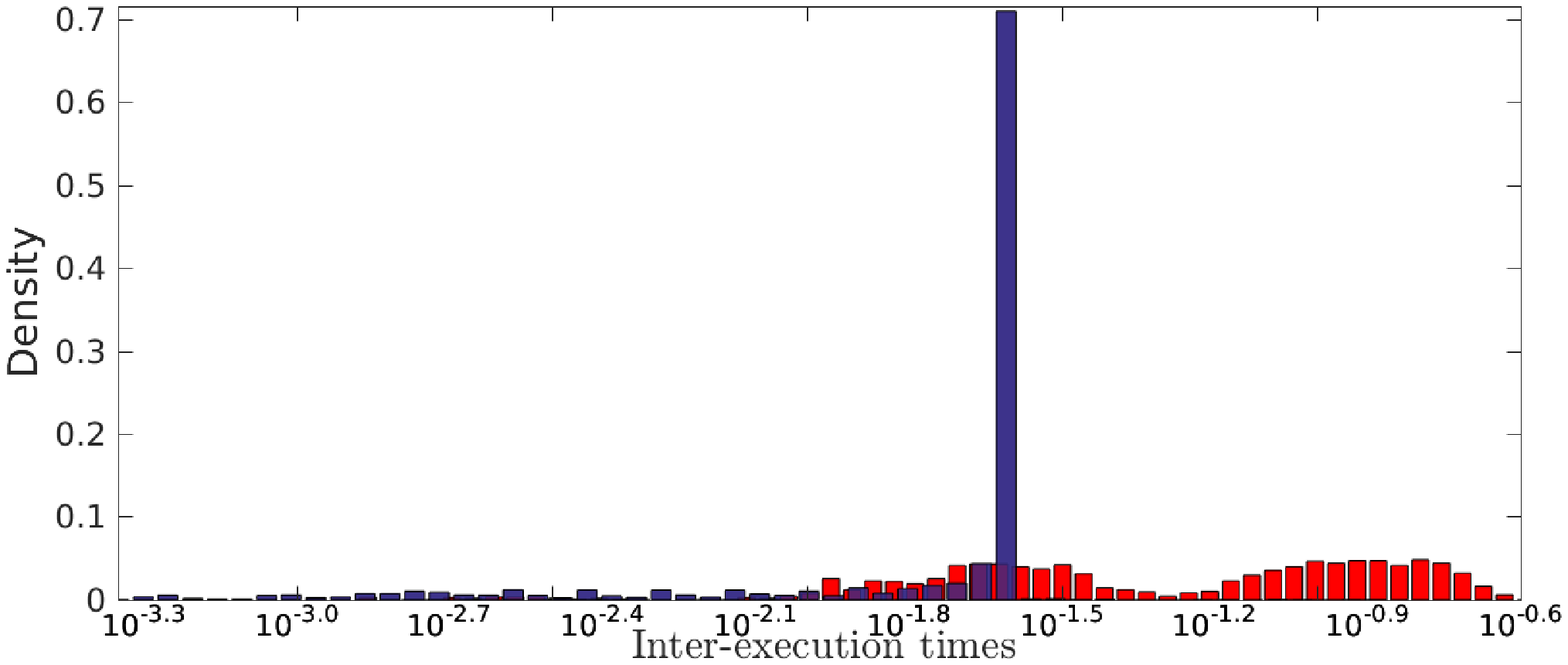} }

\caption{Density of the inter-execution times (logarithmic scale) computed for $100$ different initial conditions given by  $u_{0}(x)  = \sum_{n=1}^l n^2\sqrt{2}\sin(n\pi x) + l(x^2 - x^3)$ for    $l=1,..,10$  and $u_{0}(x) = \sum_{n=1}^l n\sqrt{2}\sin(n^2\pi x) + l(x^2 - x^3)$, for  $l=11,...,100$ on a  frame of $1s$. With  $\beta=0.07$  implying  slow sampling and therefore larger inter-execution times (red bars) and with   $\beta=0.02$ implying  fast sampling and therefore smaller inter-execution times (blue bars). }
\label{histograms}
\end{figure*}

%
%
%

\section{Conclusion}\label{conslusion_and_perspect}

%

In this paper, we have proposed an event-triggered boundary control to stabilize (on events) a reaction-diffusion PDE system with Dirichlet boundary condition. A suitable state-dependent event-triggering condition  is considered. It  determines when the control has to be updated. It has been proved the existence of a minimal dwell-time which is independent of the initial condition. Thus, it has been proved  that there is no Zeno behavior and thus the well-posedness and the stability of the closed-loop system 
are guaranteed.

In future work, we may consider observer-based event-triggered control and possibly  sampling   output measurements  on events as well. It may suggest that another event-triggered  strategy shall be considered to be combined with the one for actuation. 
We expect also to  address periodic event-triggered strategies inspired by some recent result from finite-dimensional systems \cite{BORGERS201881}. For that, we may use the obtained dwell-time as a period or to come up with a maybe less conservative period. In either cases, the period would be utilized to monitor periodically the triggering condition  while the actuation is still on events. This would represent even a more realistic approach toward digital realizations while  reducing the consumption of computational resources.


\bibliographystyle{plain}
\bibliography{ETC_heat}

\end{document}